\def\?[#1]{\textbf{[#1]}\marginpar{\Large{\textbf{??}}}}
\newlist{inlineroman}{enumerate*}{1}
\setlist[inlineroman]{itemjoin*={{, and }},afterlabel=~,label=\roman*.}
\newcommand{\inlineitem}[1][]{%
	\ifnum\enit@type=\tw@
	{\descriptionlabel{#1}}
	\hspace{\labelsep}
	\else
	\ifnum\enit@type=\z@
	\refstepcounter{\@listctr}\fi
	\quad\@itemlabel\hspace{\labelsep}
	\fi}
\DeclareSymbolFont{fouriersymbols}{FMS}{futm}{m}{n}
\DeclareSymbolFont{fourierlargesymbols}{FMX}{futm}{m}{n}
\DeclareMathDelimiter{\VERT}{\mathord}{fouriersymbols}{152}{fourierlargesymbols}{147}
\theoremstyle{plain}
\newtheorem{theo}{Theorem}
\newtheorem{prop}{Proposition}[section]
\newtheorem{lem}[prop]{Lemma}
\newtheorem{cor}{Corollary}
\newtheorem{defi}{Definition}
\theoremstyle{remark}
\newtheorem*{rem}{Remark}
\theoremstyle{definition}
\numberwithin{equation}{section}
\newcommand{\sgn}{\mathrm{sgn}}
\newcommand{\dt}{\tfrac{d\tau}{\tau}}
\DeclareMathOperator{\grad}{grad}
\newcommand{\RR}{\mathbb{R}}
\newcommand{\CC}{\mathbb{C}}
\renewcommand{\Im}{\operatorname{Im}}
\title[The null-geodesic flow near horizons]
{The null-geodesic flow near horizons}
\author{Oran Gannot}
\email{gannot@northwestern.edu}
\address{Department of Mathematics, Lunt Hall, Northwestern University,
	Evanston, CA 60208, USA}
\begin{document}

\begin{abstract}

This note describes the behavior of null-geodesics near nondegenerate Killing horizons in language amenable to the application of a general framework, due to Vasy and Hintz, for the analysis of both linear and nonlinear wave equations. Throughout, the viewpoint of Melrose's b-geometry on a suitable compactification of spacetime at future infinity is adopted.
\end{abstract}

	\maketitle

\section{Introduction}

In this article we study the null-geodesic flow in the neighborhood of a Killing horizon. The surface gravity is given an interpretation as the exponential rate of attraction or repulsion of null geodesics. One novelty is the framing in terms of Melrose's b-geometry on a manifold with boundary. A dramatic example illustrating the utility of this viewpoint can be found in the recent proof of nonlinear stability of slowly rotating Kerr-de Sitter spacetimes \cite{hintz2016global}. In fact, using the language of b-geometry, a general definition of a \emph{b-horizon} is introduced; importantly, this definition applies to spacetimes which are not necessarily stationary.

The primary motivation for this investigation comes from work of Vasy on the microlocal analysis of certain wave equations. In \cite{vasy:2013}, Vasy exploits dynamical properties of the null-geodesic flow on Kerr-de Sitter spacetimes to prove Fredholm properties of the stationary wave operator (see Section \ref{subsect:stationary} below). The same dynamical configuration was also observed near the boundaries of even asymptotically hyperbolic spaces and compactifications of Minkowski spaces \cite[Section 4]{vasy:2013}. 

One reason for defining a b-horizon on a Lorentzian manifold with boundary is to clarify the geometric structure responsible for the similarities between the aforementioned examples. Such a mechanism is not explicitly identified in \cite{vasy:2013} or \cite{hintz2013semilinear}, where each of the examples is treated in an ad hoc manner. Of particular interest for rotating black holes is the role played by an additional symmetry, see Section \ref{subsect:generalhorizon}. Abstracting these common features obviates the need to verify the local hypotheses of \cite[Section 2]{vasy:2013} or \cite{hintz2013semilinear} in individual cases.

In Section \ref{subsect:stationary} below we describe applications to the quasinormal modes of stationary spacetimes bounded by Killing horizons, using the results of \cite{vasy:2013}. In particular, the surface gravity plays an important role in the radial point estimates of \cite[Section 2.4]{vasy:2013} and \cite[Section 2.1]{hintz2013semilinear}. It was observed a posteriori that the surface gravity appears as a natural quantity in these estimates, and here we clarify its role (we remark that the appearance of the surface gravity is more transparent in the work of Warnick \cite{warnick:2015:cmp} on quasinormal modes, see the end of Section \ref{subsect:stationary})

There are also consequences for the study of nonlinear wave equations \cite{baskin2015asymptotics,hintz2013global,hintz2015global,hintz2013semilinear,hintz2016global}, although any reasonable discussion of these works exceeds the scope of this note.

\subsection{Statement of the main theorem} \label{subsect:maintheo}
For the reader familiar with the terminology of Lorentzian b-geometry (see Section \ref{sect:prelims} below for the relevant notation), the main dynamical result is stated below.

The basic setup is as follows: let $M$ be an $(n+1)$ dimensional manifold with boundary $X = \partial M$, equipped with a Lorentzian b-metric $g$ of signature $(1,n)$. Assume that there is a boundary defining function $\tau$ for which $d\tau/\tau$ is timelike on $X$, and orient causal vectors by declaring $d\tau/\tau$ to be \emph{past}-directed. If $G$ is the quadratic form on $^bT^*M$ induced by the metric and $\widehat{\rho}$ is homogeneous of degree $-1$ in the fibers, then the rescaled b-Hamilton vector field $\widehat{\rho} H_G$ on $^bT^*M \setminus 0$ descends to a b-vector field on $^bS^*M$.

We now give a preliminary definition of a b-horizon: let $\mu = g(\tau\partial_\tau,\tau\partial_\tau)$, viewed as a function on $X$, which is independent of the choice of boundary defining function. Thus the sign of $\mu$ determines the causality properties of $\tau\partial_\tau$ on the boundary. By a slight notational abuse the exterior derivative $d\mu \in T^*X$ is considered as an element of $^bT^*X$ (more precisely one should write $\pi (d\mu) \in {^bT^*X}$, where $\pi$ is the injection in \eqref{eq:adjointmap}).

\begin{defi} \label{defi:killing1}
	A subset $H \subset X$ is called a {\bf b-horizon} if $H$ is a compact connected component of $\{ \mu = 0\}$ such that
	\begin{equation} \label{eq:horizoncondition}
	d\mu^\sharp(q) \in {^b N}_qX, \quad q \in H.
	\end{equation}
	From \eqref{eq:horizoncondition}, there exists a function $\varkappa : H \rightarrow \mathbb{R}$, called the {\bf surface gravity} of $H$, such that
	\begin{equation} \label{eq:surfacegravity}
	d\mu^\sharp = 2\varkappa \tau \partial_\tau \text{ on } H.
	\end{equation}
	If $\varkappa$ is nowhere vanishing, then the b-horizon is said to be {\bf nondegenerate}.
\end{defi}

This definition is discussed in detail in Section \ref{subsect:horizon}. A more complete definition that also applies to rotating black hole spacetimes (and for which Theorem \ref{theo:bdynamics} below is also valid) is given in Section \ref{subsect:generalhorizon}. 

In either case, the b-conormal bundle $^bN^*H$ necessarily consists of null b-covectors, hence splits into its future/past-directed components $\mathcal{R}_\pm$. Then, $L_\pm$ is defined to be the image of $\mathcal{R}_\pm \setminus 0$ in $^bS^*_XM$. Near $L_\pm$ a valid choice for $\widehat{\rho}$ is the function $|\tau^{-1}H_G\tau|^{-1} = |2g^{-1}(d\tau/\tau,\cdot)|^{-1}$.

\begin{theo} \label{theo:bdynamics}
If $H$ is a nondegenerate b-horizon, then $L_{\pm(\sgn \varkappa)}$ is invariant under $\widehat{\rho}H_G$, and $L_{\pm(\sgn \varkappa)}$ is a source/sink for the $\widehat{\rho}H_G$ flow within $^bS^*_X M$. Quantitatively, there are nondegenerate quadratic defining functions $\rho_\pm$ for $L_\pm$ within $^bS^*_XM$ such that for each $\delta>0$,
\[
\pm (\sgn \varkappa) \widehat{\rho} H_G \rho_\pm \geq 2(1-\delta) \beta  \rho_\pm 
\]
near $L_\pm$, where $\beta = |H_G \widehat{\rho}\,| > 0$. If $\varkappa$ is constant, then $\beta$ can be replaced by $|\varkappa|$.
\end{theo}

Theorem \ref{theo:bdynamics} shows that $L_\pm$ is either a source or sink for the $\widehat{\rho} H_G$ flow within $^bS^*_XM$, depending on the sign of $\varkappa$. This of course implies the same statement for the null-geodesic flow, namely the flow restricted to the image of $\{G=0\} \setminus 0$ in $^bS^*_XM$. By definition of $\widehat{\rho}$,
\[
\mp \tau^{-1} \widehat{\rho} H_G \tau = 1 \text{ at } L_\pm,
\]
so infinitesimally there is also a stable/unstable direction transverse to $^bS^*_X M$ at $L_\pm$. This shows that if $\varkappa > 0$, then the global dynamics on $^bS^*M$ has a saddle point structure near $L_\pm$, whereas if $\varkappa < 0$, then globally $L_\pm$ is a sink/source; see Figure \ref{fig:globalflow}. The former situation occurs for Kerr-de Sitter spacetimes \cite[Section 3]{hintz2013semilinear}, while the latter occurs for asymptotically Minkowski spacetimes  \cite{baskin2015asymptotics,baskin2016asymptotics}, \cite[Section 5]{hintz2013semilinear}.

	\tikzset{->-/.style={decoration={
			markings,
			mark=at position #1 with {\arrow{>}}},postaction={decorate}}}

\tikzset{-<-/.style={decoration={
			markings,
			mark=at position #1 with {\arrow{<}}},postaction={decorate}}}
\begin{figure} \label{fig:globalflow}
	\centering
	\begin{tikzpicture}
	%\draw[very thick] (0,0) to [out=90,in=195] (2,1.5);
	\draw[->-=.3, very thick] (1,8) -- (4,8);
	\draw[-<-=.7, very thick] (4,8) -- (7,8);
	\draw[fill] (4,8) circle [radius=0.1];
	\node[above] at (4,8) {$L_-$};
	\node[above] at (6.5,8) {${^b}S^*_X M$};
	\node at (5,5.5) {$\varkappa > 0$};
	\draw[-<-=-.6, very thick] (4,5) -- (4,8);
	%\draw[->-=.5, very thick] (0,7.5) to [out=0,in=92] (3.9,4);
	
	\draw[->-=.5, very thick] (1,7.8) .. controls (3.5,7.6) .. (3.8,5);
	\draw[->-=.5, very thick] (7,7.8) .. controls (4.5,7.6) .. (4.2,5);
	\end{tikzpicture}
	\qquad
	\begin{tikzpicture}
	%\draw[very thick] (0,0) to [out=90,in=195] (2,1.5);
	\draw[->-=.7, very thick] (4,8) -- (1,8);
	\draw[->-=.7, very thick] (4,8) -- (7,8);
	\draw[fill] (4,8) circle [radius=0.1];
	\node[above] at (4,8) {$L_-$};
	\node[above] at (6.5,8) {${^b}S^*_X M$};
	\node at (5,5.5) {$\varkappa < 0$};
	\draw[->-=.7, very thick] (4,8) -- (4,5);
	%\draw[->-=.5, very thick] (0,7.5) to [out=0,in=92] (3.9,4);
	\draw[->-=.7, very thick] (4,8) -- ++(-135:3);
	\draw[->-=.7, very thick] (4,8) -- ++(-45:3);
	
	\end{tikzpicture}
	\caption{A schematic representation of the null-geodesic flow near $L_-$, where $\varkappa > 0$ on the left, and $\varkappa < 0$ on the right. The horizontal lines represent the flow within the boundary ${^b}S^*_XM$. The directions of the arrows are reversed near $L_+$.}
		\end{figure}
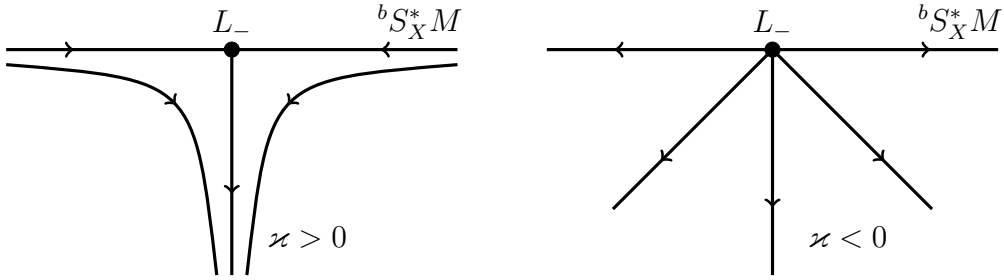

\subsection{Stationary spacetimes and quasinormal modes} \label{subsect:stationary}

Since the language of b-geometry is not yet pervasive in the general relativity literature, we now give a self-contained discussion of the implications of Theorem \ref{theo:bdynamics} in more traditional notation, for the special case of stationary spacetimes. In particular, under additional global dynamical assumptions, there are nontrivial consequences for the study of quasinormal modes --- see Theorem \ref{theo:vasy} below. The relationship between stationary spacetimes and b-geometry is discussed in Section \ref{subsect:relation}.

By a stationary spacetime we mean a Lorentzian manifold $\mathcal{M}$ which admits a complete Killing vector field $T$ transverse to a spacelike hypersurface $\mathcal{X}$, such that each integral curve of $T$ intersects $\mathcal{X}$ exactly once. For the remainder of the introduction assume that $\mathcal{X}$ is compact, possibly with boundary. Translating $\mathcal{X}$ along the integral curves of $T$ for time $t$ induces a diffeomorphism between $\mathcal{M}$ and $\RR_t \times \mathcal{X}$.

For a stationary spacetime, the wave operator $\Box$ commutes with $T$. Therefore the composition 
 \[
 \widehat{\Box}(\sigma) = e^{i\sigma t}\circ \Box \circ e^{-i\sigma t}
 \]
 descends to a well defined differential operator on $\mathcal{X}$. Implicit in this statement is the identification of functions on $\mathcal{X}$ with those on $\mathcal{M}$ which are invariant under $T$. When $\mathcal{M}$ is the domain of outer communcations of a black hole spacetime, the complex poles of $\widehat{\Box}(\sigma)^{-1}$ (with certain outgoing boundary conditions at the horizons) are usually referred to as quasinormal frequencies, otherwise known as scattering poles or resonances. No attempt is made to give an overview of the vast physical and mathematical literature in this subject --- instead, the reader is referred to the reviews \cite{berti2009quasinormal,konoplya:2011,nollert1999quasinormal,zworski2016mathematical}, as well as \cite{zworski:resonances}.
 
 Formally, the distribution of quasinormal frequencies is related to the behavior of solutions to the wave equation by the Fourier inversion formula. In general, the first step in making this correspondence rigorous is to show that $\widehat{\Box}(\sigma)^{-1}$ is well defined and admits a discrete set of complex poles. This can be a substantial task depending on the underlying geometry of $\mathcal{M}$; see \cite{zworski2016mathematical} for some recent developments. Here we focus on a class of stationary spacetimes with a prescribed structure at spatial infinity (in the context of this note, spatial infinity is realized as a boundary component of $\mathcal{M}$). 
 
 Recall that a null hypersurface $\mathcal{H} \subset \mathcal{M}$ is said to be a Killing horizon generated by a Killing vector field $K$ if $\mathcal{H}$ agrees with a connected component of the set where $K$ is null and nonzero. Since $K$ is Killing, $K$ is necessarily tangent to $\mathcal{H}$. In particular, the gradient of $g(K,K)$ and $K$ are orthogonal null vectors on $\mathcal{H}$. Thus there is a function $\varkappa:\mathcal{H} \rightarrow \RR$, called the surface gravity, such that
 \begin{equation} \label{eq:stationarygravity}
\grad_g\left({g(K,K)}\right) = -2\varkappa K
 \end{equation}
 on $\mathcal{H}$. We will consider stationary spacetimes which are bounded by Killing horizons in the following sense (which we remark is not a standard definition).

 \begin{defi}  \label{defi:bounded}
 	A Killing horizon $\mathcal{H} \subset \mathcal{M}$ generated by a Killing vector field $K$ is said to be {\bf compatible} with the pair $(T,\mathcal{X})$ if $[T,K]=0$ and $V=T-K$ is tangent to $\mathcal{X}$.

 	A stationary spacetime $\mathcal{M}$ with boundary is said to be { \bf bounded by nondegenerate Killing horizons} if $\partial\mathcal{M}$ is  a disjoint union of compatible Killing horizons $\mathcal{H}_1,\ldots,\mathcal{H}_N$ generated by Killing vector fields  $K_1,\ldots, K_N$, such that $K_j$ is timelike in a punctured neighborhood $\mathcal{H}_j$, and each surface gravity $\varkappa_j$ is positive.
 \end{defi}

 When $\mathcal{H} \subset \partial M$, either $g(K,K) > 0$ or $g(K,K) < 0$ in a punctured neighborhood of $\mathcal{H}$; we enforce the former alternative. With a view towards quasinormal modes, only Killing horizons with positive surface gravity are considered  in Definition \ref{defi:bounded} (Theorem \ref{theo:bdynamics} applies more generally, regardless of sign).

The fundamental example of a spacetime bounded by Killing horizons is the Kerr--de Sitter solution, corresponding to a rotating black hole with positive cosmological constant. Theorem \ref{theo:bdynamics} was established directly for these spacetimes in \cite[Section 6]{vasy:2013} and \cite[Section 3]{hintz2013semilinear}.  The Kerr and Kerr--AdS spacetimes also contain Killing horizons, but also contain other types of infinities (an asymptotically flat end for Kerr spacetimes and conformally timelike boundary for Kerr--AdS spacetimes).

Let $\mathcal{C}_\pm \subset T\mathcal{M} \setminus 0$ denote the future/past-directed light cones, oriented according to the convention that $dt$ is future-directed. We will focus our attention on those null-geodesics $\gamma$ (with nonzero tangent vectors) which have zero energy with respect to $T$,
\[
 g\left(\dot{\gamma},T\right)=0.
\]
Observe that a zero-energy null-geodesic is necessarily confined to the region where $T$ fails to be timelike. Now if $\mathcal{H}$ is a Killing horizon generated by $K$, decompose its nonzero normal bundle 
\[
N\mathcal{H}\setminus 0 = \left(N\mathcal{H} \cap \mathcal{C}_+\right) \cup \left( N\mathcal{H} \cap \mathcal{C}_-\right).
\] 
The following corollary rephrases the qualitative source/sink statement of Theorem \ref{theo:bdynamics} in terms of geodesics (rather than a flow on the cotangent bundle). In this setting it is not relevant whether $\mathcal{H}$ is a boundary component of $\mathcal{M}$ or embedded in its interior.

\begin{cor} \label{theo:localnontrapping}
	Suppose that $\mathcal{H} \subset \mathcal{M}$ is a compatible Killing horizon with positive surface gravity. Then, there exist conic neighborhoods $U_{\pm} \subset \mathcal{C}_\pm$ of $N\mathcal{H} \cap \mathcal{C}_\pm$ such that if $\gamma$ is a zero-energy null-geodesic and $\dot{\gamma}(0) \in U_{\pm} \setminus \left(N\mathcal{H} \cap \mathcal{C}_\pm\right)$, then $\dot{\gamma}(s)\rightarrow N\mathcal{H} \cap \mathcal{C}_\pm$ as $s\rightarrow \mp \infty$.
\end{cor}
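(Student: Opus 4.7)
The plan is to reduce Corollary \ref{theo:localnontrapping} to Theorem \ref{theo:bdynamics} via the b-compactification of the stationary spacetime. Using the diffeomorphism $\mathcal{M} \cong \RR_t \times \mathcal{X}$ induced by $T$, introduce $\tau = e^{-t}$ as a boundary defining function and form $M = [0,\epsilon)_\tau \times \mathcal{X}$, the b-compactification at future infinity. Stationarity ensures that $g$ extends to a Lorentzian b-metric on $M$ and that $T = -\tau\partial_\tau$ is a b-vector field. Per the correspondence of Section \ref{subsect:relation}, the compatible Killing horizon $\mathcal{H}$ becomes a nondegenerate b-horizon $H \subset X = \partial M$ with the same surface gravity $\varkappa > 0$. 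If $\mathcal{H}$ lies in the interior of $\mathcal{M}$, first restrict to a stationary open subset of $\mathcal{M}$ bounded by $\mathcal{H}$ before compactifying.

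Next, identify the relevant objects on the two sides. Under the metric isomorphism $v \mapsto g(v,\cdot)$, extended naturally to the b-category, the time-oriented null normal bundle $N\mathcal{H} \cap \mathcal{C}_\pm$ dualizes to $\mathcal{R}_\pm \setminus 0 \subset {}^bN^*H$, and projectivizing yields $L_\pm \subset {}^bS^*_X M$. Since $T$ is Killing, the zero-energy condition $g(\dot\gamma, T) = 0$ dualizes to $\langle\xi, T\rangle = 0$, a quantity preserved by the Hamilton flow of $G$. Hence zero-energy null-geodesics correspond to integral curves of $\widehat{\rho}H_G$ on $^bS^*M$ satisfying this extra constraint. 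With $\varkappa > 0$, Theorem \ref{theo:bdynamics} asserts that $L_+$ is a source and $L_-$ a sink for $\widehat{\rho}H_G$ within $^bS^*_X M$, with defining functions $\rho_\pm$ decaying exponentially along the backward/forward flow.

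Define $U_\pm$ to be the $g^{-1}$-dual of a conic neighborhood $\{\rho_\pm < \epsilon\}$ of $L_\pm$, extended smoothly over a small $\tau$-neighborhood of $X$. The source/sink property then implies that, for $\dot\gamma(0) \in U_\pm \setminus (N\mathcal{H} \cap \mathcal{C}_\pm)$, the lifted bicharacteristic converges to $L_\pm$ in $^bS^*_XM$ under backward/forward flow; dualizing back to $T\mathcal{M}$ gives the desired convergence to the conic set $N\mathcal{H} \cap \mathcal{C}_\pm$. The main obstacle is coordinating the several parametrizations: the affine parameter $s$ of $\gamma$, the flow of $H_G$ on $^bT^*M$, and the reparametrized $\widehat{\rho}H_G$-flow on $^bS^*M$. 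Since $\widehat{\rho} > 0$ off the zero section, the rescaling preserves trajectories as point sets, so qualitative source/sink statements transfer unchanged. To fix the orientations, invoke the identity $\mp\tau^{-1}\widehat{\rho}H_G\tau = 1$ at $L_\pm$ noted after Theorem \ref{theo:bdynamics}: backward (resp.\ forward) $\widehat{\rho}H_G$-flow from $L_+$ (resp.\ $L_-$) increases $\tau$, corresponding to $s \to \mp\infty$ for future/past-directed null-geodesics, which is precisely the direction asserted in the corollary.
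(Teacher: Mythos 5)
Your overall route is the paper's own (the corollary is stated there precisely as a rephrasing of Theorem \ref{theo:bdynamics} through the compactification of Section \ref{subsect:relation}): set $\tau=e^{-t}$, identify $N\mathcal{H}\cap\mathcal{C}_\pm$ dually with $\mathcal{R}_\pm$ and hence with $L_\pm$, and invoke the source/sink statement. However, two steps as written do not go through. First, Theorem \ref{theo:bdynamics} concerns the flow \emph{restricted to} ${}^bS^*_X M$, whereas the lift of a zero-energy null-geodesic lies over $\tau>0$; indeed, by your own orientation remark ($\mp\tau^{-1}\widehat{\rho}H_G\tau=1$ at $L_\pm$), in the relevant direction of flow ($s\to\mp\infty$ on $\widehat{\Sigma}_\pm$) the lift \emph{recedes} from the boundary, so it cannot literally ``converge to $L_\pm$ in ${}^bS^*_XM$'' as you assert. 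The missing ingredient is stationarity as dilation invariance: since $g$ is invariant under dilations in $\tau$, the components of $\widehat{\rho}H_G$ transverse to $\tau$ are autonomous and coincide with its restriction to ${}^bS^*_X M$ --- equivalently, the boundary flow is the quotient of the physical flow by time translations, which is exactly what Section \ref{subsect:relation} records. Theorem \ref{theo:bdynamics} then applies to the \emph{projected} trajectory, whose convergence to $L_\pm$ is what dualizes to $\dot{\gamma}(s)\rightarrow N\mathcal{H}\cap\mathcal{C}_\pm$; since that target set is $T$-invariant, the fact that $t(\gamma(s))\to\mp\infty$ causes no contradiction.

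The same issue affects your construction of $U_\pm$: dualizing $\{\rho_\pm<\epsilon\}$ ``extended over a small $\tau$-neighborhood of $X$'' yields a neighborhood only of the large-$t$ portion of $N\mathcal{H}\cap\mathcal{C}_\pm$, not a conic neighborhood of the whole $T$-invariant set, as the corollary demands; the fix is again to take $U_\pm$ to be the $T$-invariant set obtained by pulling $\{\rho_\pm<\epsilon\}$ back through the time-translation quotient before dualizing. Two smaller points: when $\mathcal{H}$ is interior there is no need to cut $\mathcal{M}$ down to one side of $\mathcal{H}$ --- doing so would in fact discard half of the required neighborhood $U_\pm$; and when the generator $K$ differs from $T$ (rotating horizons), the appeal must be to the general form of Theorem \ref{theo:bdynamics} for b-horizons generated by $\tau\partial_\tau+V$ (Definition \ref{def:bhorizonrotating}), whose additional hypotheses ($V\mu$ vanishing quadratically along $H$, $V\varkappa=0$, existence of $\underline{\ell}$) are verified in Section \ref{sect:applications} using that $T$ and $K$ are commuting Killing fields; citing ``the correspondence of Section \ref{subsect:relation}'' by itself does not supply that verification.
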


 Corollary \ref{theo:localnontrapping} says nothing about the global behavior of zero-energy null-geodesics. This motivates the following definition.

\begin{defi} \label{defi:nontrapping}
	Let $\mathcal{M}$ be bounded by nondegenerate Killing horizons and $\gamma$ be a zero-energy null-geodesic satisfying $\dot{\gamma}(0) \notin  N\partial \mathcal{M}$. If $\dot{\gamma}(0) \in \mathcal{C}_\pm$, then $\gamma$ is said to be {\bf nontrapped} if the following conditions hold:
	\begin{enumerate} \itemsep6pt 
		\item $\dot{\gamma}(s) \rightarrow N \partial \mathcal{M}\cap \mathcal{C}_\pm$ as $s\rightarrow \mp \infty$,
		\item there exists $s_0 \geq 0$ such that $\gamma(\pm s_0) \in \partial \mathcal{M}$.
			\end{enumerate}
If each zero-energy null-geodesic is nontrapped, then $\mathcal{M}$ is said to be {\bf nontrapping at zero energy}.
	\end{defi}

 \noindent It follows from Theorem \ref{theo:localnontrapping} that if $\mathcal{M}$ is nontrapping at zero energy, then $\dot{\gamma}$ tends to $N\mathcal{H}_j \cap \mathcal{C}_\pm$ for exactly one value of $j$ (see Section \ref{sect:applications} for more on the connection with Theorem \ref{theo:bdynamics}).

 In preparation for the next result, recall the notation $\bar{H}^{s}(\mathcal{X})$ for the Sobolev space of $H^s$ distributions which are extendible across the boundary $\partial \mathcal{X}$, see \cite[Appendix B.2]{hormanderIII:1985}. If $\mathcal{M}$ is bounded by nondegenerate Killing horizons, let ${\varkappa}_\mathrm{sup}$ and ${\varkappa}_{\mathrm{inf}}$ denote the maximum and minimum of all $\varkappa_j$ over all $\mathcal{H}_j$.
 
 \begin{theo} \label{theo:vasy}
Suppose that $\mathcal{M}$ is bounded by nondegenerate Killing horizons and is nontrapping at zero energy. Let $\varkappa = {\varkappa}_{\mathrm{inf}}$ if $s \geq 1/2$, and $\varkappa = {\varkappa}_{\mathrm{sup}}$ if $s < 1/2$. Then,
\[
\widehat{\Box}(\sigma): \{u \in \bar{H}^{s}(\mathcal{X}): \widehat{\Box}(0)u \in \bar{H}^{s-1}(\mathcal X)\} \rightarrow \bar{H}^{s-1}(\mathcal{X})
\]
is Fredholm in the half-plane $\Im \sigma >(1/2-s)\cdot \varkappa$.
\end{theo}

\noindent Since $\mathcal{X}$ is spacelike, energy estimates for the wave equation imply that there exists $C_0$ such that $\widehat{\Box}(\sigma)$ is invertible for $\Im \sigma > C_0$ \cite{dyatlov:2011:cmp,hintz2013semilinear}. Analytic Fredholm theory shows that
\[
\widehat{\Box}(\sigma)^{-1} : C^\infty(\mathcal{X}) \rightarrow C^\infty(\mathcal{X})
\]
is a meromorphic family of operators for $\sigma \in \CC$. Quasinormal frequencies are then defined as poles $\widehat{\Box}(\sigma)^{-1}$, where the outgoing boundary conditions are recast as a smoothness condition up to $\partial \mathcal{X}$ (hence the use of extendible Sobolev spaces).

Theorem \ref{theo:vasy} was first established for Schwarzschild--de Sitter spacetimes by S\'a Baretto--Zworski  \cite{barreto1997distribution}, and for slowly rotating Kerr-de Sitter spacetimes by Dyatlov \cite{dyatlov:2011:cmp}. Vasy generalized these results, providing a robust proof as part of a more general microlocal framework \cite{vasy:2013}, and it is precisely this framework that we use to establish Theorem \ref{theo:vasy}.

A slightly more restrictive version of Theorem \ref{theo:vasy} was also established by Warnick \cite{warnick:2015:cmp} using physical space (rather than microlocal) methods. Roughly speaking, \cite{warnick:2015:cmp} applies when each Killing horizon is generated by $T$, and  $T$ is everywhere timelike in the interior $\mathcal{M}^\circ$ (in particular, the global nontrapping condition is vacuously satisfied).

Using Vasy's method, a version of Theorem \ref{theo:vasy} also holds for the Klein--Gordon equation on Kerr--AdS spacetimes \cite{gannot:2014:kerr} (Theorem \ref{theo:vasy} does not apply directly in that setting, since Kerr--AdS is not bounded by Killing horizons due to the presence of a conformally timelike boundary). Theorem \ref{theo:vasy} is also closely related to the meromorphic continuation of the resolvent on even asymptotically hyperbolic spaces --- see \cite[Chapter 5]{zworski:resonances}, \cite{zworski2015resonances}, and \cite[Section 4]{vasy:2013}.

\section{Preliminaries} \label{sect:prelims}

The purpose of this section is to recall some basic notions in Lorentzian b-geometry and fix the relevant notation. For a thorough treatment the reader is referred to \cite{melrose1993atiyah}.

\subsection{Manifolds with boundary}

Let $M$ be a smooth manifold with boundary $X$. Say that $x \in C^\infty(M)$ is a boundary defining function if
\[
x \geq 0, \quad X = \{ x = 0\}, \quad dx\neq 0 \text{ on $X$}.
\]
Local coordinates $(x,y^i)$ near $q\in X$ are called adapted coordinates provided $x$ is a local boundary defining function; Latin indices will always range $i=1,\ldots,n$. In particular, $(y^i)$ are local coordinates on $X$ when restricted to the boundary.

\subsubsection{The b-tangent bundle}

Those vector fields on $M$ which are tangent to $\partial M$ form a Lie subalgebra $\mathcal{V}_b(M)$ of $\mathcal{V}(M)$, the latter denoting the algebra of vector fields on $M$. In adapted coordinates $\mathcal{V}_b(M)$ is locally spanned by 
\begin{equation} \label{eq:btangentbasis}
x\partial_x\text{ and } \partial_{y^1},\ldots,\partial_{y^n}.
\end{equation}
Moreover, $\mathcal{V}_b(M)$ coincides with sections of a natural vector bundle $^b TM$, the b-tangent bundle. If $(a^0,a^i)$ are coefficients for a local basis \eqref{eq:btangentbasis}, then local coordinates on $^bTM$ are given by $(x,y^i,a^0,a^i)$.

The inclusion $\mathcal{V}_b(M) \hookrightarrow \mathcal{V}(M)$ induces a bundle map 
\begin{equation} \label{eq:i}
i : {^b T M} \rightarrow T M.
\end{equation}
Over points $q \in M^\circ$ this map is an isomorphism. On the other hand, if $q \in X$, then $i$ has a one-dimensional kernel denoted by ${^b N_q X}$. The subbundle $^b N X \subset {^bT_XM}$ is trivial: it is spanned by the nonvanishing section 
\[
x\partial_x \in \mathcal{V}_b(M).
\] 
This section is well defined independently of the choice of local boundary defining function $x$. The range of $i$ is $T_q X$, viewed as a subspace of $T_q M$. In summary, there is a short exact sequence
\begin{equation} \label{eq:btangentexact}
0 \rightarrow {^b NX} \rightarrow {^bT_X M} \rightarrow TX \rightarrow 0
\end{equation}
of bundles over $X$.

\subsubsection{The b-cotangent bundle} \label{subsubsect:bcotangent}

The dual bundle $^b T^*M = (^b TM)^*$ is called the b-conormal bundle; the pairing between $v \in {^bT_qM}$ and $\varpi \in {^bT^* _qM}$ will be written as $\left<v,\varpi\right>$. In adapted coordinates, sections of $^bT^*M$ are locally spanned by
\[
\frac{dx}{x} \text{ and } dy^1,\ldots,dy^n.
\]
If $(\sigma,\eta_i)$ are corresponding coefficients, then $(x,y^i,\sigma,\eta_i)$ are local coordinates on $^bT^*M$. Observe that $\sigma$ is actually a well defined function on $^b T^*_X M$, since $\sigma = \left<x\partial_x,\cdot\right>$ and $x\partial_x$ is defined over $X$ independently of the boundary defining function.

There is also a dual discussion with regards to the adjoint  $i^* : T^* M \rightarrow {^bT^* M}$ of \eqref{eq:i}. Again, over points $q \in M^\circ$ this map is an isomorphism. Over the boundary there is an injection
\begin{equation} \label{eq:adjointmap}
\pi: T^*X \rightarrow {^b T^*M},
\end{equation}
which is adjoint to the map $^b T M \rightarrow TX$ in \eqref{eq:btangentexact}.  Its image $^b T^*X$ consists of elements in $^b T^* M$ which annihilate $x\partial_x$.

At times it is convenient to fix a splitting of the exact sequence \eqref{eq:btangentexact}. This can be done smoothly by choosing a global boundary defining function $x$ and decomposing
\[
^b T_q M = {^bN_qX} \oplus \{ v \in {^bT_q M}: \left<v, dx/x\right> = 0 \}
\]
for $q\in X$. When no confusion can arise, denote the annihilator of $dx/x$ by $^b TX$, even though the latter subbundle depends on choices; then $i$ is an isomorphism from $^bTX$ onto $TX$.

If $Z \subset X$ is a submanifold and $q\in Z$, one may also consider the set of all vectors in $^b T_p M$ which are tangent to $Z$. The annihilator of this subspace is denoted by $^b N_q^* Z$. Observe that $^b N^*Z \subset {^bT^*_Z X}$ since elements of $^b N^*Z$ must annihilate $x\partial_x$. Viewed as a subbundle of $T^*_Z X$ via \eqref{eq:adjointmap}, the b-conormal bundle of $Z$ is canonically identified with the conormal bundle of $Z$ within $X$.

\subsection{Lorentzian b-metrics} \label{subsect:lorentzianbmetric}

A Lorentzian b-metric $g$ is a symmetric bilinear form on $^b TM$ of signature $(1,n)$. The inverse metric is denoted by $g^{-1}$. Write $G \in C^\infty({^bT^*M})$ for the associated quadratic form on $^b T^*M$: if $\varpi \in T^*_qM$, then
\[
G(\varpi) = g^{-1}(\varpi,\varpi).
\]
Given $v \in {^bT_qM}$, let $v^\flat \in {^bT^*_qM}$ denote the unique b-covector such that 
$g^{-1}(v^\flat,\cdot) = \left<v,\cdot\right>$. Similarly, if $\varpi \in {^b T_q^*M}$, then $\varpi^\sharp \in {^bT_qM}$ is the unique b-vector such that $g(\cdot,\varpi^\sharp) = \left<\cdot,\varpi\right>$ (so $df^\sharp$ is the b-gradient of $f$, but that terminology will not be used).

Consider the restriction of $G$ to the interior $^b T^* _{M^\circ}M$, which is canonically identified with $T^*M^\circ$. By restricting coordinates $(x,y^i,\sigma, \eta_i)$ for $^b T^*M$ to $T^* M^\circ$, the usual Hamilton vector field of $G$ is
\begin{equation} \label{eq:H_G}
H_G = (\partial_\sigma G)x\partial_x  - (x\partial_x G)\partial_\sigma + (\partial_{\eta_i}G)\partial_{y^i} - (\partial_{y^i} G)\partial_{\eta_i} 
\end{equation}
(for summation, the index $i$ is considered a subscript in $\partial_{y^i}$ and a superscript in $\partial_{\eta_i}$). This expression shows that $H_G$ has a unique extension to a vector field on $^bT^*M$ which is tangent to $^b T^*_XM$, as well as tangent to the level sets of $\sigma$ within $^bT^*_XM$.

The geodesic flow refers to the flow of $H_G$ on $^bT^*M\setminus 0$. Let
\[
\Sigma = \{ G = 0\} \setminus 0
\]
denote the union of light cones, which is invariant under $H_G$; the restriction of $H_G$ to $\Sigma$ generates the null-geodesic flow.

If $\widehat\rho\in C^\infty({^bT^*M}\setminus 0)$ is homogeneous of degree $-1$ in the fiber variables, then $\widehat\rho H_G$ descends to a well defined vector field on the quotient 
\[
^bS^*M = \left({^bT^*M} \setminus 0\right)/\RR_+,
\]
where $\RR_+$ acts by positive dilations in the fibers. Although the corresponding vector field on $^bS^*M$ depends on the choice of $\widehat{\rho}$, the integral curves of $\widehat{\rho} H_G$ on $^bS^*M$ are well defined up to reparametrization. Furthermore, integral curves of $H_G$ project down to reparametrizations of $\widehat{\rho}H_G$ integral curves; if $\widehat{\rho} > 0$ this preserves orientation as well. Since $\Sigma$ is conic, the null-geodesic flow is therefore equivalent to the $\widehat{\rho}H_G$ flow on $\widehat{\Sigma} = \Sigma/\RR_+ \subset {^bS^*M}$. Integral curves, on either $\Sigma$ or $\widehat{\Sigma}$, are called null-bicharacteristics.

\section{b-Horizons}

\subsection{Decomposing the metric} In this section $M$ will denote an arbitrary $(n+1)$ dimensional manifold with boundary $X$, equipped with a Lorentzian b-metric $g$. We first discuss the standard $1+n$ decomposition of the metric in the language of b-geometry. Assume there exists a boundary defining function $\tau$ for which $d\tau/\tau$ is timelike on $X$,
\[
G(d\tau/\tau) > 0.
\]
Define the lapse function $A = G(d\tau/\tau)^{-1/2}$ and set $N = -A(d\tau/\tau)^\sharp \in {^bTM}$ near $X$, so $g(N,N) = 1$. Causal vectors in $^bT M$ are oriented by declaring $N$ to be future directed. As in Section \ref{subsubsect:bcotangent}, decompose 
\[
^bT_XM = {^bNX} \oplus {^bTX},
\]
where $^bTX$ is the annihilator of $d\tau/\tau$. Equivalently $^bTX$ is the $g$-orthogonal complement of $\RR N$, so $g$ is negative definite on $^bTX$ since $N$ is timelike. Let 
\[
h = A^{2}\left(\dt\otimes \dt\right) - g.
\]
Here we are using $\otimes$ to denote a symmetric tensor product. Although this $(0,2)$-tensor is defined wherever $d\tau/\tau$ is timelike, only its properties over the boundary will be used. 
\begin{lem} \label{lem:hsignature} The tensor $h$ is positive definite on $^bTX$, and $\ker h = \RR N$.
\end{lem}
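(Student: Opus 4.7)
The plan is to exploit the $g$-orthogonal decomposition $^bT_XM = \mathbb{R}N \oplus {^bTX}$ already recorded in the excerpt, where $^bTX$ is the annihilator of $d\tau/\tau$ and $g$ is negative definite on $^bTX$ (since it is the $g$-orthogonal complement of the timelike unit vector $N$). Because $h$ is symmetric and bilinear, it suffices to test it separately on the two factors and on the cross terms.

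First I would verify positivity on $^bTX$. For $v \in {^bTX}$ the annihilator condition $\langle v, d\tau/\tau\rangle = 0$ kills the first term in the definition of $h$, so
\[
h(v,v) = A^2 \langle v, d\tau/\tau\rangle^2 - g(v,v) = -g(v,v),
\]
which is strictly positive for $v \neq 0$ by the negative-definiteness of $g|_{^bTX}$. This gives the first assertion.

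Next I would show $N \in \ker h$ by computing $h(N,w)$ for arbitrary $w \in {^bT_XM}$. The key identities are $\langle N, d\tau/\tau\rangle = -A\,G(d\tau/\tau) = -A^{-1}$ and $g(N,w) = -A\,\langle w, d\tau/\tau\rangle$, both immediate from $N = -A(d\tau/\tau)^\sharp$ and the definition of the musical isomorphism. Substituting,
\[
h(N,w) = A^2\,\langle N, d\tau/\tau\rangle\,\langle w, d\tau/\tau\rangle - g(N,w) = -A\,\langle w, d\tau/\tau\rangle + A\,\langle w, d\tau/\tau\rangle = 0,
\]
so $\mathbb{R}N \subset \ker h$. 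For the reverse inclusion, any $v \in \ker h$ decomposes uniquely as $v = cN + w$ with $w \in {^bTX}$; testing against $w$ and using $h(N,w) = 0$ leaves $h(w,w) = 0$, which forces $w = 0$ by the positivity already established. Hence $\ker h = \mathbb{R}N$.

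There is no real obstacle here: the statement is essentially a matter of unpacking definitions once the orthogonal splitting and the formula $g(N,\cdot) = -A\,\langle\cdot, d\tau/\tau\rangle$ are written down. The only mild subtlety is remembering that $^bTX$ and the splitting itself depend on the choice of boundary defining function $\tau$, but since the lemma is only asserted relative to the same $\tau$ used to build $h$ and $N$, this causes no trouble.
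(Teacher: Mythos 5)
Your proof is correct and follows essentially the same route as the paper: both rest on the splitting $^bT_XM = \RR N \oplus {^bTX}$ and the identity $h(v,\cdot) = -g(v,\cdot)$ for $v \in {^bTX}$, the only difference being that you verify $h(N,\cdot)=0$ by direct computation with $N = -A(d\tau/\tau)^\sharp$, whereas the paper phrases the kernel step via the coincidence of the $h$- and $g$-orthogonal complements of $^bTX$ together with $h(N,N)=0$.
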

\begin{proof}
If $v \in {^bTX}$, then $h(v,w) = -g(v,w)$ for any $w \in {^bT_XM}$. In particular 
\[
h|_{^bTX} = -g|_{^bTX},
\] 
so $h$ is positive definite on $^bTX$. Additionally, the $h$-orthogonal complement of $^bTX$ equals its $g$-orthogonal complement, namely $\RR N$. Since $h(N,N) = 0$, the kernel of $h$ is $\RR N$.
\end{proof}
Define the shift vector $W = -(\tau\partial_\tau + AN) \in {^b T_X M}$. By construction $W$ is $g$-orthogonal to $N$, so $W \in {^bTX}$.
In adapted coordinates $(\tau,y^i)$,
\[
g|_X = A^2 \left(\dt\otimes \dt\right) - h_{ij}\left(dy^i - W^i\dt\right)\otimes\left(dy^j - W^j \dt\right).
\]
A similar construction holds for the inverse metric. Begin by defining the $(2,0)$-tensor
\[
k = N\otimes N - g^{-1}.
\]
In analogy with Lemma \ref{lem:hsignature}, consider the $g^{-1}$-orthogonal decomposition  
\[
^bT^*_XM = \RR d\tau/\tau\oplus (\RR d\tau/\tau)^\perp,
\] 
where $g^{-1}$ is negative definite on $(\RR d\tau/\tau)^\perp$. In general $(\RR d\tau/\tau)^\perp \neq  {^bT^*X}$, with equality only over those points where the shift vector vanishes.

\begin{lem}
	The tensor $k$ is positive definite on $^b T^*X$, and $\ker k = (\RR d\tau/\tau)^\perp$.
\end{lem}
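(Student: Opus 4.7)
The plan is to carry out a dual calculation mirroring the proof of Lemma \ref{lem:hsignature}, but working on covectors. The two preliminary identities I would record are $\langle N, \varpi\rangle = -A\, g^{-1}(d\tau/\tau, \varpi)$ (which follows from $N = -A(d\tau/\tau)^\sharp$ and the definition of $\sharp$), yielding in particular $\langle N, d\tau/\tau\rangle = -A^{-1}$ and $\langle N, \varpi_0\rangle = 0$ for every $\varpi_0 \in (\RR d\tau/\tau)^\perp$.

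Writing an arbitrary $\varpi \in {^bT^*_XM}$ as $\varpi = \lambda\, d\tau/\tau + \varpi_0$ in the $g^{-1}$-orthogonal decomposition, the defining formula $k = N\otimes N - g^{-1}$ then gives
\[
k(\varpi, \varpi) = \langle N, \varpi\rangle^2 - g^{-1}(\varpi, \varpi) = \lambda^2 A^{-2} - \bigl(\lambda^2 A^{-2} + g^{-1}(\varpi_0, \varpi_0)\bigr) = -g^{-1}(\varpi_0, \varpi_0),
\]
and polarization yields $k(\varpi, \varpi') = -g^{-1}(\varpi_0, \varpi'_0)$. Since $g^{-1}$ is negative definite on $(\RR d\tau/\tau)^\perp$, this makes $k$ positive semidefinite on all of $^bT^*_XM$, with its degenerate direction exactly the one-dimensional axis $\{\varpi_0 = 0\}$; hence $k$ is positive definite when restricted to any linear complement of that axis.

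To identify $^bT^*X$ as such a complement I would use $\tau\partial_\tau = -(W + AN)$ with $W \in {^bTX}$. Because $W$ annihilates $d\tau/\tau$, the pairing evaluates to $\langle \tau\partial_\tau, \varpi\rangle = \lambda - \langle W, \varpi_0\rangle$, so $\varpi \in {^bT^*X}$ iff $\lambda = \langle W, \varpi_0\rangle$. This exhibits $^bT^*X$ as a graph over $(\RR d\tau/\tau)^\perp$ transversal to the degenerate direction of $k$, and substituting back into the formula above gives $k|_{^bT^*X}(\varpi, \varpi) = -g^{-1}(\varpi_0, \varpi_0) > 0$ for $\varpi \neq 0$. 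The main bookkeeping obstacle is keeping $^bT^*X$ and $(\RR d\tau/\tau)^\perp$ distinct throughout; the lemma's preamble emphasizes that these subspaces coincide only where the shift vector $W$ vanishes, and the shift-dependent formula $\lambda = \langle W, \varpi_0\rangle$ is exactly what is needed to translate between them.
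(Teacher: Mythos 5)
Your argument is correct and essentially reproduces the paper's proof: decompose $\varpi = \lambda\, d\tau/\tau + \varpi_0$ with $\varpi_0 \in (\RR\, d\tau/\tau)^\perp$, use that $N$ annihilates $(\RR\, d\tau/\tau)^\perp$ to get $k(\varpi,\varpi) = -g^{-1}(\varpi_0,\varpi_0) \geq 0$ with a one-dimensional degenerate direction, and conclude positive definiteness on ${^bT^*X}$ because $d\tau/\tau \notin {^bT^*X}$; your graph description $\lambda = \left<W,\varpi_0\right>$ of ${^bT^*X}$ is just a more explicit form of that last observation. One remark: your computation identifies the kernel as $\RR\, d\tau/\tau$ rather than $(\RR\, d\tau/\tau)^\perp$ as printed in the statement, but this is not a gap on your side --- the paper's own proof likewise concludes that $\RR\, d\tau/\tau$ is the kernel (in analogy with $\ker h = \RR N$ in Lemma \ref{lem:hsignature}), so the kernel claim as printed is a typo and your version is the intended one.
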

\begin{proof}
	By definition $N$ annihilates $(\RR d\tau/\tau)^\perp$, so $k$ is positive definite on $(\RR d\tau/\tau)^\perp$. The $k$-orthogonal complement of $(\RR d\tau/\tau)^\perp$ is $\RR d\tau/\tau$, and the latter is the kernel of $k$. Although it is not necessarily true that $(\RR d\tau/\tau)^\perp = {^bT^*X}$, one always has that $d\tau/\tau \notin {^bT^*X}$, hence $k$ is positive definite on $^bT^*X$ as well.
\end{proof}

In adapted coordinates, the components $k^{ij}$ satisfy $k^{ij} = h^{ij}$. Here, $h^{ij}$ is the inverse of the matrix $h_{ij}$, which exists by Lemma \ref{lem:hsignature} since $\partial_{y^1},\ldots,\partial_{y^n}$ span $^bTX$. The corresponding expression for $g^{-1}|_X$ in these coordinates is 
\begin{equation} \label{eq:ginverse}
g^{-1}|_X = A^{-2}(\tau\partial_\tau + W^i \partial_i)(\tau\partial_\tau + W^i \partial_i) - h^{ij}\partial_i \partial_j.
\end{equation}
In coordinates $(\tau,y^i,\sigma,\eta_i)$,
\begin{equation} \label{eq:symbol}
G|_{{^bT^*_X M}} = A^{-2}\sigma^2  + 2A^{-2} W^i \eta_i \sigma + \left(A^{-2} W^i W^j - h^{ij}\right)\eta_i \eta_j.
\end{equation}
The derivatives of $G$ in the fiber variables are then
\begin{equation} \label{eq:HG}
\begin{cases}
\partial_{\eta_i} G = 2\left( (A^{-2} W^i W^j - h^{ij})\eta_j + A^{-2}W^i \sigma \right),\\
\partial_\sigma G = 2A^{-2}\left(\sigma + W^i \eta_i \right).
\end{cases}
\end{equation}
These are the coefficients of $\partial_{y^i}$ and $\tau\partial_{\tau}$, respectively, in the coordinate expression for $H_G$.

The first goal is to show that $dG \neq 0$ and $dG, d\tau$ are linearly independent on $\Sigma \cap {^bT^*_XM}$.  The former condition holds in a neighborhood of the boundary as well, so locally $\Sigma$ is an embedded conic submanifold of $^bT^*M\setminus 0$; no claim is made about the global behavior of $\Sigma$. The latter condition is that $\Sigma$ meets ${^bT^*_XM}$ transversally. Both statements follow from the timelike nature of $d\tau/\tau$, which implies that $G$ restricted to $^bT^*_XM$ is strictly hyperbolic in the direction of $d\tau/\tau$; explicitly,
\[
\partial_\sigma G \neq 0 \text{ on }\Sigma \cap {^bT^*_XM}.
\]

\begin{lem} \label{lem:dG}
$dG \neq 0$, and $dG, d\tau$ are linearly independent on $\Sigma \cap {^bT^*_XM}$.
\end{lem}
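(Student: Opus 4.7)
The plan is to reduce both assertions to the single claim $\partial_\sigma G \neq 0$ on $\Sigma \cap {^bT^*_X M}$ (the point flagged in the paragraph preceding the lemma), and to verify that claim directly from the explicit expression \eqref{eq:symbol}.

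First I would verify $\partial_\sigma G \neq 0$ by contradiction. From \eqref{eq:HG}, $\partial_\sigma G = 2A^{-2}(\sigma + W^i\eta_i)$. Suppose a point $(0,y,\sigma,\eta) \in {^bT^*_X M}$ satisfies both $\partial_\sigma G = 0$ and $G = 0$. The first condition gives $\sigma = -W^i\eta_i$, and substituting into \eqref{eq:symbol} the $A^{-2}$ terms cancel, leaving $G = -h^{ij}\eta_i\eta_j$. The covector $\eta_i\,dy^i$ annihilates $\tau\partial_\tau$ and hence lies in $^bT^*X$; by the preceding lemma the tensor $k$ with components $k^{ij} = h^{ij}$ is positive definite on $^bT^*X$. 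Hence $G = 0$ forces $\eta = 0$, and then $\sigma = -W^i\eta_i = 0$ as well, placing us at the zero section --- contradicting membership in $\Sigma$.

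Both conclusions of the lemma follow quickly. For $dG \neq 0$: in the coordinate basis $d\tau, dy^i, d\sigma, d\eta_i$ the $d\sigma$-component of $dG$ equals $\partial_\sigma G \neq 0$. For the linear independence of $dG$ and $d\tau$: the covector $d\tau$ (pulled back from $M$) has vanishing $d\sigma$-component, so any relation $a\,dG + b\,d\tau = 0$ forces $a\,\partial_\sigma G = 0$, whence $a = 0$; then $b\,d\tau = 0$ forces $b = 0$ since $d\tau \neq 0$ on $X$.

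No serious obstacle is anticipated: the proof is essentially an unpacking of \eqref{eq:symbol}. The only mild care is in identifying the fiber variables $\eta_i$ with components of covectors in $^bT^*X$ so as to invoke the positive definiteness of $k^{ij} = h^{ij}$ --- but this identification is immediate from $\langle \tau\partial_\tau, dy^i\rangle = 0$.
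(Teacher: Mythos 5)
Your proof is correct, and it follows the same overall strategy as the paper: everything is reduced to the non-vanishing of $\partial_\sigma G$ on $\Sigma \cap {^bT^*_XM}$, after which $dG \neq 0$ and the linear independence of $dG, d\tau$ are read off from the $d\sigma$-component exactly as in the text. The only difference is how that key non-vanishing is verified: the paper argues invariantly, noting $\partial_\sigma G = 2g^{-1}(d\tau/\tau,\cdot)$, so a point of $^bT^*_XM\setminus 0$ where $\partial_\sigma G$ vanishes is a nonzero covector orthogonal to the timelike covector $d\tau/\tau$, hence spacelike and not in $\Sigma$; you instead substitute $\sigma = -W^i\eta_i$ into \eqref{eq:symbol} and use the positive definiteness of $h^{ij}=k^{ij}$ on $^bT^*X$ to force $\eta = 0$, $\sigma = 0$. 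Your computation is in effect a coordinate proof of the same causality fact (a covector $g^{-1}$-orthogonal to a timelike covector is spacelike), so both routes are sound; the invariant phrasing is shorter and makes clear the statement is independent of the $1+n$ splitting, while your version has the minor virtue of being entirely self-contained given \eqref{eq:symbol} and the lemma on $k$.
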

\begin{proof}
	If $\partial_\sigma G(\varpi) = 0$ for $\varpi \in {^bT^*_X M\setminus 0}$, then $g^{-1}(d\tau/\tau,\varpi) = 0$. Since $d\tau/\tau$ is timelike, this implies that $\varpi$ is spacelike, so $\varpi \notin \Sigma$. Therefore $\partial_\sigma G \neq 0$ and hence $dG \neq 0$ on $\Sigma \cap {^bT^*_XM}$. Clearly $dG$ and $d\tau$ are also therefore linearly independent on $\Sigma \cap {^b T^*_XM}$.
\end{proof}
As in Lemma \ref{lem:dG}, the set $\{g^{-1}(d\tau/\tau,\cdot) = 0 \}$ does not intersect $\Sigma$ near $^bT^*_X M$, since $d\tau/\tau$ is timelike there. At least near the boundary then, $\Sigma$ is the disjoint union $\Sigma = \Sigma_+ \cup \Sigma_-$, where
\[
\Sigma_\pm = \{ \pm g^{-1}(-d\tau/\tau,\cdot)  > 0 \}.
\]
Recall that timelike vectors are oriented with respect to $-d\tau/\tau$, so $\Sigma_\pm$ is the union of future/past-directed light cones. The sets $\Sigma_\pm$ are conic and invariant under $H_G$.

\subsection{b-Horizons and surface gravity} \label{subsect:horizon}

Here we discuss b-horizons in the sense of Definition \ref{defi:killing1}. In the stationary setting considered in Section \ref{subsect:stationary}, this definition corresponds to a Killing horizon generated by $T$. For more details, see Section \ref{subsect:relation}. Some examples are the event horizons of Schwarzschild and Schwarzschild--AdS spacetimes, as well as the event and cosmological horizons of the Schwarzschild--de Sitter spacetime. It excludes, however, their rotating (Kerr) generalizations, which are handled in Section \ref{subsect:generalhorizon}. This is also the setting which applies to the boundaries of even asymptotically hyperbolic metrics (after an appropriate change of smooth structure \cite[Section 4.9]{vasy:2013}).

Observe that the condition \eqref{eq:horizoncondition} is purely a statement over the boundary. In terms of the lapse function and shift vector, 
\[
\mu  = A^2 - h(W,W),
\]
and in adapted coordinates, \eqref{eq:surfacegravity} is equivalent to $\partial_{y^i} \mu = 2\varkappa W^j h_{ij}$ along $H$. If $\varkappa$ does not vanish, then $H\subset X$ is an embedded hypersurface and $\mu$ is a defining function.

Henceforth $H$ will denote an arbitrary nondegenerate b-horizon. Let $\omega \in {^bT^*X}$ denote the unique b-covector such that $\left<\cdot,\omega\right> = h(\cdot,W)$ on $^bTX$. Now 
\[
k(\omega,\omega) = h(W,W) = A^2- \mu,
\] 
so $\omega \neq 0$ near $H$. Since  $d\mu = 2\varkappa \omega$ on $H$, it follows that $\omega(q)$ at a point $q\in H$ spans $^bN^*_qH$. Let $\Pi$ be the $k$-orthogonal complement of the line bundle $\RR \omega$ within $^bT^*X$. Wherever $\omega \neq 0$, any $\varpi \in {^bT^*_X M}$ can be written uniquely as 
\begin{equation} \label{eq:varpi_decomp}
\varpi = \sigma(d\tau/\tau) + \xi \omega + \zeta,
\end{equation}
where $\xi \in \RR$ and $\zeta \in \Pi$ is the $k$-orthogonal projection of $\varpi - \sigma(d\tau/\tau)$ onto $\Pi$. In this parametrization,
\begin{equation} \label{eq:Ginomega}
G(\varpi) = A^{-2}\sigma^2 + 2A^{-2}  (A^2 -\mu) \xi \sigma - A^{-2}(A^2-\mu)\mu \xi^2 - k(\zeta,\zeta).
\end{equation}
The restriction of $g^{-1}$ to $\Pi$ is therefore negative definite, and the $g^{-1}|_{^bT^*X}$-orthogonal complement of $\Pi$ is $\RR\omega$. The signature of $g^{-1}|_{^bT^*X}$ changes as one passes from $\{ \mu  >  0 \}$ to $\{ \mu < 0\}$:
\begin{lem} \label{lem:grestrictedsignature}
The tensor $g^{-1}|_{^bT^*X}$ has the following properties.
\begin{enumerate} \itemsep6pt
	\item $g^{-1}|_{^bT^*X}$ is negative definite on $^bT^*X \cap \{ \mu  > 0\}$.
	\item  $g^{-1}|_{^bT^*X}$ is degenerate on $^bT^*_{H} X$ with kernel $ {^bN^*H}$.
	\item $g^{-1}|_{^bT^*X}$ is of Lorentzian signature on $^bT^*X \cap \{ \mu < 0 \}$.
\end{enumerate}
Therefore $\Sigma \cap {^b T^*X} \cap \{ \mu > 0 \} = \emptyset$, and furthermore $ \pm \xi < 0$ on $\Sigma_\pm \cap {^b T^*X}$.
\end{lem}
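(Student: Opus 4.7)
The plan is to read off all three signature statements, together with both consequences, directly from the parametrization \eqref{eq:varpi_decomp} and the explicit formula \eqref{eq:Ginomega} for $G$. The starting observation is that $^bT^*X$ consists precisely of those $\varpi$ with $\sigma = 0$, since $^bT^*X$ is the annihilator of $\tau\partial_\tau$ and $\sigma = \langle\tau\partial_\tau,\varpi\rangle$, while $\omega$ and $\zeta$ both lie in $^bT^*X$ by construction. Setting $\sigma = 0$ in \eqref{eq:Ginomega} leaves the two-term expression
\[
G|_{^bT^*X}(\varpi) \;=\; -A^{-2}(A^2-\mu)\,\mu\,\xi^2 \;-\; k(\zeta,\zeta).
\]

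Next I would fix the sign of each prefactor. The previous lemma gives that $k$ is positive definite on $^bT^*X$, so $-k(\zeta,\zeta) \leq 0$ with equality iff $\zeta = 0$. By definition $A^2 - \mu = h(W,W)$, and nondegeneracy of $H$ forces $\omega \neq 0$ on $H$: indeed $d\mu = 2\varkappa \omega$ on $H$, and $d\mu$ cannot vanish on a nondegenerate horizon, since $H$ is an embedded hypersurface of which $\mu$ is a defining function. Hence $W \neq 0$ and $A^2 - \mu > 0$ on $H$, which persists by continuity on a neighborhood. In that neighborhood the sign of the $\xi^2$ coefficient is exactly opposite to the sign of $\mu$, and the three claimed signature statements follow mechanically: when $\mu > 0$ both summands are nonpositive and vanish simultaneously only at $\xi = \zeta = 0$, giving negative definiteness; when $\mu = 0$ the $\xi^2$ term drops, so $G|_{^bT^*_HX}(\varpi) = -k(\zeta,\zeta)$ with kernel $\{\zeta = 0\} = \RR\omega = {^bN^*H}$; and when $\mu < 0$ the two summands contribute opposite signs, producing Lorentzian signature $(1,n-1)$ on the $n$-dimensional space $^bT^*X$.

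The last two consequences are then immediate. Emptiness of $\Sigma \cap {^bT^*X}\cap\{\mu > 0\}$ follows from (1), since a null b-covector in $^bT^*X \cap \{\mu > 0\}$ would have to vanish. For the sign of $\xi$ on $\Sigma_\pm \cap {^bT^*X}$, I would invoke the identity $\partial_\sigma G = 2 g^{-1}(d\tau/\tau,\cdot)$ recorded earlier; differentiating \eqref{eq:Ginomega} in $\sigma$ and evaluating at $\sigma = 0$ gives
\[
g^{-1}(d\tau/\tau,\varpi) \;=\; A^{-2}(A^2-\mu)\,\xi
\]
on $^bT^*X$, and since $A^{-2}(A^2-\mu) > 0$ near $H$, the sign of $g^{-1}(d\tau/\tau,\varpi)$ equals the sign of $\xi$; the orientation $\Sigma_\pm = \{\pm g^{-1}(-d\tau/\tau,\cdot) > 0\}$ then yields $\pm \xi < 0$ on $\Sigma_\pm \cap {^bT^*X}$. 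I do not anticipate any real obstacle: once \eqref{eq:varpi_decomp} is in hand and nondegeneracy is used to secure $A^2 - \mu > 0$ near $H$, every conclusion reduces to reading off signs in a quadratic form in $(\xi,\zeta)$, and the whole argument is naturally local near $H$, which is all that is needed downstream.
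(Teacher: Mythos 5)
Your proposal is correct and follows essentially the same route as the paper: the paper reads the signature off the $g^{-1}$-orthogonal splitting $^bT^*X = \RR\omega \oplus \Pi$ together with $G(\omega) = -A^{-2}(A^2-\mu)\mu$, which is exactly your block-diagonal reading of \eqref{eq:Ginomega} at $\sigma = 0$, and the $\pm\xi<0$ claim is obtained from the same identity $\partial_\sigma G|_{^bT^*X} = 2A^{-2}(A^2-\mu)\xi$. Your detour through nondegeneracy to get $A^2-\mu>0$ near $H$ is harmless but unnecessary, since on $H$ one has $\mu=0$ and hence $A^2-\mu = A^2>0$ directly.
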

\begin{proof}
	Recall the $g^{-1}|_{^bT^*X}$-orthogonal decomposition $^bT^*X = \RR \omega \oplus \Pi$, 
	where $g^{-1}|_{^bT^*X}$ is negative definite on $\Pi$. The claims about the signature of $g^{-1}|_{^bT^*X}$ then follow from the expression
	\[
	G(\omega) = -A^{-2}(A^2-\mu)\mu,
	\]
	which changes sign upon crossing from $\{ \mu > 0\}$ to $\{ \mu < 0 \}$. This immediately shows that $\Sigma\cap {^bT^*X}$ does not enter the region $\{ \mu > 0\}$.  Additionally, 
	\[
	\partial_\sigma G|_{^bT^*X} = 2A^{-2}(A^2-\mu) \xi,
	\]
	so $\xi$ has the same sign as $\partial_\sigma G$ on $\Sigma \cap {^bT^*X}$.
\end{proof}

The same argument also determines the causal nature of $d\mu$ in a neighborhood of $H$. Of course $d\mu|_{H}$ is null by definition.

\begin{lem} \label{lem:dmucausal}
	There exists $\varepsilon > 0$ such that $d\mu$ is timelike on $\{ -\varepsilon < \mu < 0\}$ and spacelike on $\{ 0 < \mu < \varepsilon \}$.
\end{lem}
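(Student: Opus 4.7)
The plan is to compute $G(d\mu)$ to leading order in $\mu$ near $H$ using the decomposition already introduced. Since $\mu$ is a function on $X$, the differential $d\mu$ annihilates $\tau\partial_\tau$ and hence lies in $^bT^*X$; equivalently, the coefficient $\sigma$ in the decomposition \eqref{eq:varpi_decomp} applied to $d\mu$ vanishes identically. I would write $d\mu = \xi\omega + \zeta$ with $\zeta$ a section of $\Pi$, which is well defined in a neighborhood of $H$ because $\omega \neq 0$ there (as $k(\omega,\omega) = A^2 - \mu > 0$ near $H$).

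With $\sigma = 0$, the formula \eqref{eq:Ginomega} specializes to
\[
G(d\mu) = -A^{-2}(A^2-\mu)\mu\,\xi^2 - k(\zeta,\zeta).
\]
On $H$, the identity $d\mu = 2\varkappa\omega$ forces $\xi|_H = 2\varkappa$ and $\zeta|_H = 0$. Nondegeneracy of $H$ ensures that $\mu$ is a smooth defining function, so in a tubular neighborhood one obtains $\xi = 2\varkappa + O(\mu)$ and $\zeta = O(\mu)$ by Taylor's theorem. Substituting these expansions and collecting terms yields
\[
G(d\mu) = -4\varkappa^2 \mu + O(\mu^2).
\]

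To finish, I would use compactness of $H$ together with nondegeneracy of $\varkappa$ to obtain a uniform lower bound $4\varkappa^2 \geq c > 0$ on $H$, and hence on a neighborhood by continuity. Choosing $\varepsilon > 0$ small enough that the $O(\mu^2)$ remainder is dominated by the leading term for $|\mu| \leq \varepsilon$, I conclude that $G(d\mu)$ has the opposite sign to $\mu$ throughout $\{0 < |\mu| < \varepsilon\}$. Under the signature $(1,n)$ convention, $G > 0$ characterizes timelike covectors and $G < 0$ characterizes spacelike ones, which is exactly the statement of the lemma.

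I do not foresee any real obstacle. The only point requiring a little care is verifying that $\zeta$ depends smoothly on the base point near $H$ and vanishes to first order in $\mu$; the smoothness follows from $\omega \neq 0$ near $H$ (so the $k$-orthogonal projection onto $\Pi$ is smooth), and the first-order vanishing from $\zeta|_H = 0$ combined with the fact that $\mu$ is a defining function for $H$.
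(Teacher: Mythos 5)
Your proposal is correct and follows essentially the same route as the paper: decompose $d\mu = \xi\omega + \zeta$ with $\zeta \in \Pi$, apply \eqref{eq:Ginomega} with $\sigma = 0$ to get $G(d\mu) = -A^{-2}(A^2-\mu)\mu\,\xi^2 - k(\zeta,\zeta)$, use $\xi|_H = 2\varkappa$, the quadratic vanishing of $k(\zeta,\zeta)$ along $H$, nonvanishing of $\varkappa$, and compactness of $H$ to conclude $G(d\mu)$ has the opposite sign to $\mu$ on a uniform neighborhood $\{|\mu| < \varepsilon\}$. The only difference is that you spell out the Taylor-expansion and uniformity details that the paper leaves implicit.
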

\begin{proof}
	Write $d\mu = \xi_0 \omega + \zeta_0$, where $\xi_0|_{H} = 2\varkappa$ and $\zeta_0|_{H} = 0$. Since the surface gravity is nonvanishing and $k(\zeta_0,\zeta_0)$ vanishes quadratically along $H$,
	\begin{equation} \label{eq:G(dmu)}
	G(d\mu) = -A^{-2}(A^2-\mu)\mu\xi_0^2 - k(\zeta_0,\zeta_0)
	\end{equation}
	has the opposite sign as $\mu$ in a neighborhood of $H$. By compactness of $H$, this neighborhood can be chosen of the form $\{ | \mu | < \varepsilon\}$ for some $\varepsilon > 0$. \end{proof}

Next consider the location of $\Sigma \cap {^bT^*_X M}$ for $\sigma \neq 0$. The main observation here is that only one component can enter $\{\mu \geq 0\}$.

\begin{lem} \label{lem:onecomponent}
If $\pm \sigma >0$, then $\Sigma_\pm \cap \left(^bT^*X + \sigma \dt\right)\cap \{\mu \geq 0\} = \emptyset$.
\end{lem}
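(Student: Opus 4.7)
My plan is to argue via the Lorentzian reverse Cauchy--Schwarz inequality applied pointwise on the fibers of $^bTM$. Since $\sigma = \langle \tau\partial_\tau, \varpi\rangle = g(\tau\partial_\tau,\varpi^\sharp)$, the sign of $\sigma$ records the relative time orientation of $\tau\partial_\tau$ and $\varpi^\sharp$, once both are shown to be causal. So the task reduces to determining causal types and orientations of these two b-vectors at $q \in X$.

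The first step is to identify $\tau\partial_\tau$ as past-directed wherever it is causal. Rearranging the definition of the shift as $\tau\partial_\tau = -AN - W$ and using $g(N,N) = 1$, $g(N,W) = 0$ (since $W \in {^bTX}$ and $N$ is $g$-orthogonal to $^bTX$) gives $g(\tau\partial_\tau, N) = -A < 0$. Because $N$ is the chosen future unit timelike b-vector, this places $\tau\partial_\tau$ in the closed past causal cone whenever $g(\tau\partial_\tau,\tau\partial_\tau) = \mu \geq 0$, which is exactly the hypothesis of the lemma. Complementarily, $(-d\tau/\tau)^\sharp = N/A$ turns the condition $\varpi \in \Sigma_\pm$ into $\pm g(N,\varpi^\sharp)/A > 0$, so $\varpi^\sharp$ is a future/past-directed null b-vector.

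Assume for contradiction that $\varpi \in \Sigma_+ \cap ({^bT^*X} + \sigma \dt)$ at a point with $\mu \geq 0$ and $\sigma > 0$. Then $\varpi^\sharp$ is future null and $\tau\partial_\tau$ is past causal, so the reverse Cauchy--Schwarz inequality (for any two causal vectors in opposite closed cones in Lorentzian signature) forces $g(\tau\partial_\tau,\varpi^\sharp) \leq 0$, i.e.\ $\sigma \leq 0$, a contradiction. The case $\varpi \in \Sigma_-$ with $\sigma < 0$ is symmetric after a sign flip. I do not expect any real obstacle; the only bookkeeping point is the degenerate boundary case $\mu = 0$, where $\tau\partial_\tau$ is null rather than timelike, but $g(\tau\partial_\tau, N) < 0$ still places it in the closed past null cone and the null--null form of reverse Cauchy--Schwarz covers this case as well.
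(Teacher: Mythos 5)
Your argument is correct, but it takes a genuinely different route from the paper's. The paper stays on the cotangent side: writing $\varpi = \sigma(d\tau/\tau) + \eta$ with $\eta \in {^bT^*X}$ and using $G(\varpi)=0$, it computes $\sigma\,\partial_\sigma G(\varpi) = A^{-2}\sigma^2 - G(\eta)$ and invokes Lemma \ref{lem:grestrictedsignature} to get $G(\eta) \leq 0$ on $\{\mu \geq 0\}$, so that $\partial_\sigma G = 2g^{-1}(d\tau/\tau,\cdot)$ has the same sign as $\sigma$ there, which is incompatible with $\varpi \in \Sigma_\pm$ when $\pm\sigma > 0$. You instead dualize to the b-tangent side: $\sigma = g(\tau\partial_\tau,\varpi^\sharp)$, membership in $\Sigma_\pm$ is exactly the time-orientation of the null vector $\varpi^\sharp$ (via $(-d\tau/\tau)^\sharp = N/A$), and the hypothesis $\mu \geq 0$ makes $\tau\partial_\tau = -AN - W$ a past-directed causal vector because $g(\tau\partial_\tau,N) = -A < 0$ (using $g(N,W)=0$, which holds since ${^bTX}$ is the $g$-orthogonal complement of $\RR N$); the reverse Cauchy--Schwarz inequality then forces $\sigma \leq 0$ (resp.\ $\sigma \geq 0$), contradicting the strict sign hypothesis, and your handling of the borderline case $\mu = 0$, where $\tau\partial_\tau$ is null and equality can occur, is the right bookkeeping. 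What your approach buys is a coordinate-free, purely causal argument that does not rely on Lemma \ref{lem:grestrictedsignature}; what the paper's computation buys is brevity given that the signature lemma is already in hand, plus the quantitative identity $\sigma\,\partial_\sigma G = A^{-2}\sigma^2 - G(\eta)$, which in particular shows $\partial_\sigma G \neq 0$ on this set. Both establish the lemma.
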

\begin{proof}
Write $\varpi \in \Sigma$ as  $\varpi = \sigma(d\tau/\tau) + \eta$ with $\eta \in {^bT^*X}$. Then
\[
\sigma \partial_\sigma G(\varpi) = A^{-2}\sigma^2 - G(\eta).
\]
If $\mu \geq 0$, then $G(\eta) \leq 0$ according to Lemma \ref{lem:grestrictedsignature}. Thus $\sigma \neq 0$ and $\mu \geq 0$ implies $\partial_\sigma G$ has the same sign as $\sigma$ on $\Sigma$.
\end{proof}

\subsection{Hamilton vector field} \label{subsect:flow} We now begin our study of the null-geodesic flow.
\begin{lem} \label{lem:HGmu}
	The restriction of $H_G \mu$  to $ \Sigma \cap {^bT^*_H M}$ vanishes precisely at ${^bN^*H}$. Furthermore, 
	\[
	\pm (\sgn \varkappa)(H_G \mu)(\varpi) < 0
	\]
	for $\varpi \in \left(\Sigma_\pm \cap {^bT^*_H M} \right) \setminus {^bN^*H}$.
	 
\end{lem}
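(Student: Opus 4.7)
The plan is to establish the identity $H_G\mu|_H = 4\varkappa\sigma$ on all of ${}^bT^*_HM$, after which the two conclusions reduce to a sign analysis in the single variable $\sigma$ on $\Sigma \cap {}^bT^*_HM$. The cleanest derivation starts from the universal identity $H_G f = 2g^{-1}(df,\varpi) = 2\langle df^\sharp,\varpi\rangle$ for $f \in C^\infty(M)$, which is a short check in b-coordinates from \eqref{eq:H_G}. Applied to $f = \mu$ and restricted to $H$, the Killing-horizon identity \eqref{eq:surfacegravity} gives $d\mu^\sharp = 2\varkappa\tau\partial_\tau$, and since $\sigma = \langle \tau\partial_\tau,\cdot\rangle$ on ${}^bT^*_XM$, the right-hand side collapses to $4\varkappa\sigma$.

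For the vanishing claim, I would use the decomposition \eqref{eq:varpi_decomp} and specialize \eqref{eq:Ginomega} to $\mu = 0$: if $\varpi \in \Sigma \cap {}^bT^*_HM$ has $\sigma = 0$, then $k(\zeta,\zeta) = 0$, so $\zeta = 0$ by positive definiteness of $k$ on $\Pi$, and hence $\varpi = \xi\omega \in {}^bN^*H$. The reverse inclusion is immediate: ${}^bN^*H \subset {}^bT^*X$ consists of covectors annihilating $\tau\partial_\tau$, so $\sigma = 0$ there, and $\omega$ is null on $H$ by \eqref{eq:Ginomega}.

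For the sign claim, compute the product $\sigma\partial_\sigma G|_H$ from \eqref{eq:Ginomega}. At $\mu = 0$ one has $\partial_\sigma G|_H = 2A^{-2}\sigma + 2\xi$, and using the null constraint $G = 0$ to eliminate the cross term gives
\[
\sigma\partial_\sigma G|_H = A^{-2}\sigma^2 + k(\zeta,\zeta) \geq 0
\]
on $\Sigma \cap {}^bT^*_HM$, with equality exactly on ${}^bN^*H$. Hence $\sigma$ and $\partial_\sigma G$ carry the same nonzero sign off ${}^bN^*H$. Since $\Sigma_\pm$ is characterized by $\mp\partial_\sigma G > 0$, this yields $\mp\sigma > 0$ on the corresponding component, and $\pm(\sgn\varkappa) H_G\mu < 0$ follows directly from $H_G\mu|_H = 4\varkappa\sigma$.

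The only nontrivial step is recognizing that \eqref{eq:surfacegravity} causes $H_G\mu$ to collapse to a multiple of $\sigma$ on $H$; the remainder is algebraic bookkeeping in the decomposition \eqref{eq:varpi_decomp}, and I do not anticipate a genuine obstacle.
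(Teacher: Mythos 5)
Your proof is correct, and it takes a more computational route than the paper's. The paper also starts from $H_G\mu = 2g^{-1}(d\mu,\cdot)$, but from there it argues causally: on $H$ the covector $d\mu$ is null and proportional to $\omega$, so a null $\varpi$ orthogonal to it must be collinear with it (hence in ${}^bN^*H$), and the sign statement follows from the fact that $(\sgn\varkappa)\,d\mu \in \Sigma_-$ over $H$ (via Lemma \ref{lem:grestrictedsignature}) together with the standard pairing rule for non-proportional null covectors in the same or opposite light cones. You instead collapse $H_G\mu|_{{}^bT^*_HM}$ to the explicit formula $4\varkappa\sigma$ using \eqref{eq:surfacegravity}, and then run the quadratic-form algebra of \eqref{eq:varpi_decomp}--\eqref{eq:Ginomega}: positive definiteness of $k$ on $\Pi$ pins down the vanishing locus, and the identity $\sigma\,\partial_\sigma G = A^{-2}\sigma^2 + k(\zeta,\zeta)$ on $\Sigma\cap\{\mu=0\}$ (essentially the computation of Lemma \ref{lem:onecomponent} specialized to the horizon) gives the sign, using $\Sigma_\pm = \{\mp\partial_\sigma G > 0\}$; your formula is also consistent with \eqref{eq:H_Gmualternate} restricted to $H$. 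What your approach buys is an explicit, self-contained reduction to a one-variable sign analysis in $\sigma$ with no appeal to Lorentzian causality lemmas; what the paper's approach buys is coordinate-free brevity and a form of the argument that transfers with almost no change to the more general b-horizons of Section \ref{subsect:generalhorizon}, where the presence of $\eta_V$ would force you to redo your explicit computation (there $H_G\mu$ at $H$ is no longer a multiple of $\sigma$ alone).
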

\begin{proof}
	Suppose that $\varpi \in \Sigma \cap {^bT^*_H M}$ satisfies $(H_G \mu) (\varpi)= 0$. Since $H_G \mu = 2g^{-1}(d\mu,\cdot)$, it follows that $\varpi$ and $d\mu$ are orthogonal null b-covectors, hence are collinear. Since $d\mu$ is proportional to $\omega$ on $H$, this implies that $\varpi \in {^b N^*H}$. Additionally, $\omega \in \Sigma_-$ over $H$ according to Lemma \ref{lem:grestrictedsignature}, so $(\sgn \varkappa) d\mu \in \Sigma_-$ over $H$. If $\varpi \in \left(\Sigma_\pm \cap {^bT^*_H M} \right) \setminus {^bN^*H}$, then $\pm  \varpi$ and $(\sgn \varkappa) d\mu$ are not orthogonal and lie in opposite light cones.
\end{proof}

\noindent As a consequence of Lemma \ref{lem:HGmu}, the set ${^bN^*H}$ is invariant under $H_G$. Indeed, 
\[
^bN^*H = \{\tau = 0,\, \sigma =0,\, G = 0, \, \mu = 0\},
\]
and $H_G$ annihilates these functions on $^bN^*H$. The two halves $\mathcal{R}_\pm = \Sigma_\pm \cap {^bN^*H}$ are also invariant under $H_G$. Furthermore, for each $C > 0$ there exists $\varepsilon > 0$ such that 
\[
\pm (\sgn \varkappa)(H_G \mu) < 0
\]
on a conic set ${^bT^*_X M} \cap \Sigma_\pm  \cap \{ |\sigma| > C^{-1}|\partial_\sigma G| \}\cap  \{ |\mu| < \varepsilon \}$. This can be improved in the region $\{ \mu < 0\}$ where $d\mu$ is timelike.

\begin{lem} \label{lem:HGmunegative}
If $\varpi \in {^bT^*_X M} \cap \Sigma_\pm \cap \{\mu < 0\}$ then $\pm (\sgn \varkappa) H_G \mu < 0$. 
\end{lem}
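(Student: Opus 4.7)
The proof reduces to computing the sign of $g^{-1}(d\mu,\varpi)$ for $\varpi\in\Sigma_\pm$ with $\mu<0$, since by definition $H_G\mu=2g^{-1}(d\mu,\cdot)$, so $(H_G\mu)(\varpi)=2g^{-1}(d\mu,\varpi)$. The strategy is to identify the causal orientation of $d\mu$ in the region $\{\mu<0\}$ and then apply the standard inner-product sign rule for causal covectors.

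By Lemma \ref{lem:dmucausal}, there exists $\varepsilon>0$ such that $d\mu$ is timelike on $\{-\varepsilon<\mu<0\}$; the argument below is carried out on this neighborhood of $H$, within which the statement is to be understood. To determine which light cone $d\mu$ occupies, I look at its boundary value. Writing $d\mu=\xi_0\omega+\zeta_0$ as in the proof of Lemma \ref{lem:dmucausal}, one has $\xi_0|_H=2\varkappa$ and $\zeta_0|_H=0$, so $d\mu|_H=2\varkappa\,\omega$. Since $\omega\in\Sigma_-$ on $H$ (a fact already exploited in the proof of Lemma \ref{lem:HGmu}), the covector $(\sgn\varkappa)\,d\mu|_H$ is past-directed null. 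As $\mu$ decreases from $0$, $(\sgn\varkappa)\,d\mu$ moves continuously into the open timelike locus; the future and past timelike sets are disjoint open cones, so by connectedness $(\sgn\varkappa)\,d\mu$ is past-directed timelike throughout $\{-\varepsilon<\mu<0\}$.

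Next I invoke the standard fact: for a Lorentzian metric, if $v$ is past-directed timelike and $\varpi$ is future-directed (resp.\ past-directed) causal and nonzero, then $g^{-1}(v,\varpi)<0$ (resp.\ $>0$). Applying this with $v=(\sgn\varkappa)\,d\mu$ yields
\[
(\sgn\varkappa)\,(H_G\mu)(\varpi)=2g^{-1}\bigl((\sgn\varkappa)\,d\mu,\varpi\bigr)<0\quad\text{on }\Sigma_+,
\]
and the opposite inequality on $\Sigma_-$. Combining the two cases gives $\pm(\sgn\varkappa)\,H_G\mu<0$ on $\Sigma_\pm\cap\{\mu<0\}$, as desired.

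The main conceptual step, and the only one that is not purely algebraic, is the continuity argument showing that $(\sgn\varkappa)\,d\mu$ lies in $\Sigma_-$ (rather than $\Sigma_+$) once $\mu$ becomes negative. This is handled by using that $(\sgn\varkappa)\,d\mu$ sits on the closure of $\Sigma_-$ at $H$, together with the disconnectedness of the timelike region into past- and future-directed components. Everything else reduces to the bilinear identity $H_G\mu=2g^{-1}(d\mu,\cdot)$ and the reverse Cauchy--Schwarz-type sign rule in Lorentzian signature.
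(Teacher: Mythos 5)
Your proof is correct and follows essentially the same route as the paper: it combines Lemma \ref{lem:dmucausal} with the observation from Lemma \ref{lem:HGmu} that $(\sgn\varkappa)\,d\mu$ lies in (the closure of) $\Sigma_-$ at $H$, and then reads off the sign of $2g^{-1}(d\mu,\varpi)$. You merely spell out the continuity/orientation step and the causal pairing rule that the paper leaves implicit, and your restriction to $\{-\varepsilon<\mu<0\}$ matches the paper's caveat that the claim holds for $|\mu|$ sufficiently small.
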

\begin{proof}
	Since $d\mu$ is timelike in the region of interest, $H_G\mu$ cannot vanish. But $(\sgn \varkappa)d\mu \in \Sigma_-$ over $H$ as noted in Lemma \ref{lem:HGmu}, so the sign condition holds on $\{ \mu < 0\}$, at least for $|\mu|$ sufficiently small.
\end{proof}

\noindent Thus $\pm(\sgn \varkappa) \mu$ is always decreasing strictly along null-bicharacteristics emanating from ${^bT^*_X M} \cap \Sigma_\pm \cap \{\mu < 0\}$.

\begin{lem} \label{lem:HGradial} If $\eta \in {^bN^*H}$, then $H_G(\eta) = 2\varkappa \xi \left(\eta_i \partial_{\eta_i}\right)$, where $\eta = \eta_i dy^i  = \xi \omega$.
\end{lem}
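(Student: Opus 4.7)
The plan is to evaluate $H_G$ at the point $\eta$ directly from the coordinate formula \eqref{eq:H_G}, exploiting the boundary expression \eqref{eq:symbol} for $G$ together with the explicit form $\eta_i = \xi\omega_i = \xi h_{ij}W^j$ of the covector.

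The first step is to eliminate all components of $H_G(\eta)$ except the one in the fiber-radial direction. At $\tau = 0$, the $\partial_\tau$ and $\partial_\sigma$ components of $H_G$ vanish automatically from the factor of $\tau$ in their respective coefficients $(\partial_\sigma G)\tau$ and $-\tau\partial_\tau G$. For the $\partial_{y^i}$ component I would substitute $\eta_j = \xi h_{jl}W^l$ into \eqref{eq:HG}: using $W^j\eta_j = \xi h_{jl}W^jW^l = \xi(A^2-\mu) = \xi A^2$ on $H$, together with $h^{ij}\eta_j = \xi W^i$, the two terms in the expression for $\partial_{\eta_i}G$ cancel, so $\partial_{\eta_i}G(\eta) = 0$.

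It then remains to show $H_G\eta_k = -\partial_{y^k}G(\eta) = 2\varkappa\xi\eta_k$. Here I would restrict to the boundary, where \eqref{eq:symbol} gives $G|_{\sigma=0} = A^{-2}(W^j\eta_j)^2 - h^{ij}\eta_i\eta_j$, differentiate in $y^k$ with the $\eta_i$ held fixed as independent fiber coordinates, and then evaluate at $\eta_j = \xi h_{jl}W^l$ at the base point. Using the identity $(\partial_{y^k}h^{ij})h_{il}h_{jm} = -\partial_{y^k}h_{lm}$ coming from differentiating $h^{ij}h_{il} = \delta^j_l$, the various derivatives of $h^{ij}$ and $h_{jm}$ should telescope and the whole sum collapses to $-\xi^2\partial_{y^k}(A^2 - h(W,W)) = -\xi^2\partial_{y^k}\mu$. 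Applying the surface gravity relation \eqref{eq:surfacegravity} in the adapted form $\partial_{y^k}\mu = 2\varkappa h_{km}W^m$ on $H$ then yields $-\partial_{y^k}G(\eta) = 2\varkappa\xi^2 h_{km}W^m = 2\varkappa\xi\eta_k$, as required.

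The main hurdle is the algebraic bookkeeping in this contraction. A conceptual shortcut is available: since ${^bN^*H}$ is already known to be $H_G$-invariant, $H_G(\eta)$ lies in $T_\eta({^bN^*H})$, and combined with $\partial_{\eta_i}G(\eta) = 0$ this forces $H_G(\eta)$ to be a scalar multiple of $\eta_i\partial_{\eta_i}$; only the scalar itself has to be computed, which could in principle be extracted by applying $H_G$ to any function that restricts to $\xi$ along ${^bN^*H}$, but in practice one still needs the explicit identification of the coefficient with $2\varkappa\xi$, which is where the surface gravity enters.
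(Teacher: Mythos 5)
Your argument is correct and is essentially the paper's: both proofs evaluate the components of $H_G$ at $\eta \in {^bN^*H}$, showing $(\partial_{\eta_i}G)(\eta)=0$ and $(\partial_{y^i}G)(\eta) = -\xi^2\partial_{y^i}\mu = -2\varkappa\xi^2\omega_i$, and then conclude from $\eta_i = \xi\omega_i$. The only difference is bookkeeping: the paper reads both identities off the decomposed form \eqref{eq:Ginomega} of $G$ (where $\sigma=0$, $\zeta=0$, $\mu=0$ and the quadratic vanishing of $k(\zeta,\zeta)$ make the cancellations automatic), whereas you carry out the equivalent contraction by hand in canonical coordinates starting from \eqref{eq:symbol} and \eqref{eq:HG}.
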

\begin{proof}
	From \eqref{eq:Ginomega}, it is easy to see that $\left(\partial_{\eta_i} G\right)(\eta)= 0$. On the other hand,
	\[
	\left(\partial_{y^i} G\right)(\eta) = -\xi^2 \partial_{y^i} \mu = -2\varkappa \xi^2 \omega_i,
	\]
 where $\omega_i = \left<\partial_{y^i}, \omega \right>$. Since $\eta_i = \xi \omega_i$, the proof follows.
\end{proof}

\noindent Therefore $H_G$ is radial on $^bN^*H\setminus 0$ in the sense that is a multiple of the vector field $\eta_i \partial_{\eta_i}$ generating dilations in the fibers of $^bT^*M$. Define 
\[\widehat{\rho} = |2g^{-1}(-d\tau/\tau,\cdot)|^{-1},
\] 
which is a degree $-1$ homogeneous function, well defined near $\mathcal{R}_\pm$. Equivalently $\widehat{\rho} = |\tau^{-1}H_G\tau|^{-1}$. Thus $\widehat{\rho}H_G$ descends to a vector field on $\widehat{\Sigma}_\pm$. Let 
\[
L_\pm = \mathcal{R}_\pm / \RR_+ \subset \widehat{\Sigma}_\pm.
\]
That $H_G$ is radial at $^bN^*H\setminus 0$ corresponds to $\widehat{\rho} H_G$ vanishing at $L_\pm$. 
%Later it will be shown that in a neighborhood of $L_\pm$ within $\widehat\Sigma_\pm$, the only place where $H_G$ can possibly vanish is $L_\pm$.
Passing to the quotient disregards whether $H_G$ was originally pointing towards or away from the zero section; this is encoded in $H_G \widehat{\rho}$, which can be viewed as a function on $^bS^*M$.
\begin{lem} \label{lem:H_Grho}
The function $\widehat{\rho}$ satisfies $H_G \widehat{\rho}\,|_{L_\pm} = \pm\varkappa$.
\end{lem}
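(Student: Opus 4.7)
The approach is to combine the radiality statement from Lemma \ref{lem:HGradial} with the Euler homogeneity identity for $\widehat{\rho}$, after first pinning down the sign of $\widehat{\rho}$ on each of $\mathcal{R}_\pm$.

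First I would unwind the definition: by \eqref{eq:HG} one has $\widehat{\rho} = |\partial_\sigma G|^{-1}$, and $\partial_\sigma G = 2A^{-2}(\sigma + W^i\eta_i)$. A point of $\mathcal{R}_\pm$ has $\sigma = 0$ and $\eta_i = \xi\omega_i$ with $\xi\neq 0$. Using $\omega_i = h_{ij}W^j$ (the defining relation of $\omega$), one gets $W^i\omega_i = h(W,W)$, which equals $A^2$ along $H$ because $\mu|_H = 0$. Hence $\partial_\sigma G|_{\mathcal{R}_\pm} = 2\xi$. Combining with Lemma \ref{lem:grestrictedsignature}, which forces $\pm\xi<0$ on $\Sigma_\pm\cap{}^bT^*X$, gives the clean identity
\[
\widehat{\rho}\,\big|_{\mathcal{R}_\pm} \;=\; \mp \frac{1}{2\xi}.
\]

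Next I would invoke Lemma \ref{lem:HGradial}: at any point $\eta\in{}^bN^*H$ the coefficients of $\partial_{y^i}$ and $\partial_{\eta_i}$ in $H_G$ combine into the fiber dilation $2\varkappa\xi\,\eta_i\partial_{\eta_i}$. The remaining coefficients of $\tau\partial_\tau$ and $\partial_\sigma$ (respectively $\partial_\sigma G$ and $-\tau\partial_\tau G$) do appear, but for \emph{any} smooth function $f$ on ${}^bT^*M$ both $\tau\partial_\tau f$ and $\tau\partial_\tau G$ vanish at $\tau=0$; so these coefficients contribute nothing to $H_G f$ evaluated on $\mathcal{R}$. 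Thus when $H_G$ acts on a smooth fiber function at points of $\mathcal{R}_\pm$, only the fiber-radial part is effective:
\[
(H_G f)|_{\mathcal{R}_\pm} \;=\; 2\varkappa\xi\,(\eta_i\partial_{\eta_i} f)|_{\mathcal{R}_\pm}.
\]

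Finally, since $\widehat{\rho}$ is homogeneous of degree $-1$ on the fibers, Euler's identity yields $(\sigma\partial_\sigma+\eta_i\partial_{\eta_i})\widehat{\rho} = -\widehat{\rho}$; specializing to $\sigma=0$, this gives $\eta_i\partial_{\eta_i}\widehat{\rho}|_{\mathcal{R}_\pm} = -\widehat{\rho}|_{\mathcal{R}_\pm} = \pm(2\xi)^{-1}$. Plugging into the previous display produces
\[
(H_G\widehat{\rho})|_{\mathcal{R}_\pm} \;=\; 2\varkappa\xi\cdot\left(\pm\frac{1}{2\xi}\right) \;=\; \pm\varkappa,
\]
and passing to the quotient $L_\pm = \mathcal{R}_\pm/\mathbb{R}_+$ gives the claim. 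The only subtle point is the justification that the $\tau\partial_\tau$ and $\partial_\sigma$ terms of $H_G$ can be dropped when computing $H_G\widehat{\rho}$ at $\mathcal{R}$; everything else is bookkeeping of signs on $\Sigma_\pm$ and a single application of Euler's identity.
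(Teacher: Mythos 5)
Your proof is correct, and it reaches the same sign bookkeeping ($\partial_\sigma G|_{\mathcal{R}_\pm}=2\xi$ with $\pm\xi<0$ from Lemma \ref{lem:grestrictedsignature}, hence $\widehat{\rho}|_{\mathcal{R}_\pm}=\mp(2\xi)^{-1}$) that the paper also needs. The organization differs slightly: the paper applies the chain rule to $\widehat{\rho}=|\partial_\sigma G|^{-1}$ and computes $H_G(\partial_\sigma G)(\eta)=4A^{-2}\varkappa\xi^2\left<W,\omega\right>=4\varkappa\xi^2$ directly at $^bN^*H$, whereas you cite Lemma \ref{lem:HGradial} as a black box and replace that computation by Euler's identity for the degree $-1$ homogeneous function $\widehat{\rho}$, i.e.\ $H_G\widehat{\rho}=-2\varkappa\xi\,\widehat{\rho}$ along $\mathcal{R}_\pm$. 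This buys a small amount of generality (it works verbatim for any degree $-1$ rescaling that is smooth and nonvanishing near $\mathcal{R}_\pm$) at the cost of the one genuinely delicate point, which you do address correctly: at $\tau=0$ the $x\partial_x$ summand of $H_G$ is the zero vector and the $\partial_\sigma$ coefficient $-\tau\partial_\tau G$ vanishes, and $\sigma=0$ on $\mathcal{R}_\pm$ makes $\eta_i\partial_{\eta_i}$ agree there with the full fiber dilation generator, so Euler's identity applies as you use it. No gaps.
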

\begin{proof}
	Within the boundary, $\widehat{\rho} = |\partial_\sigma G|^{-1}$. In that case,
	\[
	H_G \widehat{\rho} = -\frac{H_G(\partial_\sigma G)}{(\partial_\sigma G)^2}\cdot \sgn(\partial_\sigma G).
	\]
	Now $\left(\partial_\sigma G\right)(\eta) = 2\xi$, and since this is a linear function of $\eta$,
	\[
	(H_G (\partial_\sigma G))(\eta)  = 4\varkappa \xi^2
	\]
	at $\eta \in {^bN^*H}$ by Lemma \ref{lem:HGradial}, so $H_G \widehat{\rho}  \,|_{L_\pm} =\pm\varkappa$ as claimed.
\end{proof}	

In light of Lemma \ref{lem:H_Grho}, define $\beta = |H_G \widehat{\rho}\,|$, viewed as a function on $^bS^*_X M$ defined near $L_\pm$. Thus $H_G\widehat{\rho} = \pm(\sgn \varkappa)\beta$ in a sufficiently small neighborhood of $L_\pm$, and if $\varkappa$ is constant (hence can be viewed as a function on $^bS^*_X M$ rather than just over $H$), then $\beta = |\varkappa|$ near $L_\pm$ modulo functions vanishing along $L_\pm$.

\subsection{The null-geodesic flow near $L_\pm$} \label{subsect:nearLpm}
Since $\widehat{\rho}H_G$ vanishes at $L_\pm$, it is natural to consider stability properties of the null-geodesic flow near these sets. It is shown that $L_\pm$ is either a source or a sink for the null-geodesic flow (in fact for the entire geodesic flow) within $^bS^*_X M$, depending on the sign of $\varkappa$ and the subscript $\pm$. 

{\bf Thus, unless stated otherwise, for the remainder of this section $\widehat{\rho}H_G$ will be considered by restriction as a vector field on $^bS^*_X M$, rather than $^bS^*M$}.

To begin, recall the decomposition \eqref{eq:varpi_decomp} and define $K: {^b T^*_X} M \rightarrow \RR$ by $K(\varpi) = k(\zeta,\zeta)$, as in \eqref{eq:Ginomega}. Set
\[
\widehat{\sigma} = \widehat{\rho}\sigma, \quad \widehat{K} = \widehat{\rho}^{\,2} K;
\]
these descend to functions near $L_\pm \subset {^b}S^*_XM$. Since $\zeta = 0$ on $^b N^*H$, it follows that $L_\pm$ is defined by 
\[
L_\pm= \{\tau =0, \, \mu =0,\, \widehat \sigma = 0,\, \widehat{K} = 0\}.
\]
As general notation, if $U \subset {^b}T^*_X M$, then $\mathcal{I}(U)$ will denote the ideal of smooth functions on $^bT^*_X M$ which vanish along $U$; the same notation will be used if $U \subset {^b}S^*_X M$. In particular, $\mathcal{I}(L_\pm)$ is the ideal of smooth functions on $^bS^*_XM$ which vanish along $L_\pm$.

\begin{lem} \label{lem:H_Gsigma}
	The function $\widehat{\sigma}$ satisfies $\widehat{\rho} H_G \widehat{\sigma} = \pm (\sgn \varkappa) \beta \widehat{\sigma} $ near $L_\pm$.
\end{lem}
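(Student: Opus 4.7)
The plan is to apply the Leibniz rule to the factorization $\widehat{\sigma} = \widehat{\rho}\,\sigma$, which gives
\[
\widehat{\rho}\, H_G \widehat{\sigma} = \widehat{\rho}\, H_G(\widehat{\rho}\,\sigma) = \widehat{\sigma}\, H_G \widehat{\rho} + \widehat{\rho}^{\,2}\, H_G \sigma,
\]
and then to handle each summand separately. All computations can be carried out on $^bT^*M \setminus 0$ near $\mathcal{R}_\pm$ and then read off after passing to the quotient $^bS^*_X M$, since both $\widehat{\sigma}$ and $\widehat{\rho}\, H_G$ descend appropriately.

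For the first summand, Lemma \ref{lem:H_Grho} gives $H_G \widehat{\rho}\,|_{L_\pm} = \pm\varkappa$, which is nonzero by nondegeneracy of $H$. By continuity, $H_G \widehat{\rho}$ retains the constant sign $\pm(\sgn \varkappa)$ throughout a sufficiently small neighborhood of $L_\pm$ in $^bS^*_X M$; since $\beta = |H_G \widehat{\rho}\,|$ there by definition, one obtains $H_G \widehat{\rho} = \pm(\sgn \varkappa)\beta$ near $L_\pm$. Consequently the first summand contributes exactly $\pm(\sgn\varkappa)\beta\widehat{\sigma}$.

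For the second summand, the coordinate formula \eqref{eq:H_G} yields $H_G \sigma = -\tau\partial_\tau G$. Since $G$ is smooth on $^bT^*M$, a Taylor expansion in $\tau$ gives $\tau \partial_\tau G = O(\tau)$; in particular $H_G \sigma$ vanishes identically on $^bT^*_X M$, so the term $\widehat{\rho}^{\,2} H_G \sigma$ drops out upon restriction. Summing the two contributions gives the stated identity. No genuine obstacle arises; the only point requiring care is the vanishing of $H_G \sigma$ on $^bT^*_X M$, which is really a manifestation of the b-tangency of $H_G$ built into \eqref{eq:H_G}—the coefficient of $\partial_\sigma$ is $-\tau\partial_\tau G$ rather than $-\partial_\tau G$, so smoothness of $G$ on $^bT^*M$ automatically forces this coefficient to vanish on the boundary.
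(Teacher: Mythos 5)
Your proof is correct and follows essentially the same route as the paper: split $\widehat{\rho}H_G\widehat{\sigma} = \widehat{\sigma}H_G\widehat{\rho} + \widehat{\rho}^{\,2}H_G\sigma$, note from \eqref{eq:H_G} that $H_G\sigma = -\tau\partial_\tau G$ vanishes on $^bT^*_XM$, and use Lemma \ref{lem:H_Grho} together with $\beta = |H_G\widehat{\rho}\,|$ to identify the remaining term as $\pm(\sgn\varkappa)\beta\widehat{\sigma}$ near $L_\pm$. You simply spell out the two ingredients (the $O(\tau)$ vanishing and the sign persistence near the compact set $L_\pm$) that the paper states more tersely.
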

\begin{proof}
	Since $H_G \sigma = 0$ identically when restricted to $^bT^*_X M$, it follows that $\widehat{\rho} H_G \widehat{\sigma} = \widehat{\sigma} H_G \widehat{\rho}$ on $^bT^*_X M$. By definition, this equals $\pm(\sgn\varkappa)\beta \widehat{\sigma}$ near $L_\pm$ . 
\end{proof}
For the next result, recall the decomposition $d\mu = \xi_0 \omega + \zeta_0$ as in Lemma \ref{lem:dmucausal}, where $\xi_0|_{H} = 2\varkappa$ and $\zeta_0|_H = 0$.
\begin{lem} \label{lem:H_Gmuplusigma}
	The function $\mu \pm 4\hspace{.05em}\widehat{\sigma}$ satisfies 
	\[
	\widehat{\rho}H_G (\mu \pm 4\hspace{.05em}\widehat{\sigma}) = \pm 2(\sgn \varkappa)\beta (\mu \pm 4\hspace{.05em}\widehat{\sigma}) + \mathcal{I}(L_\pm)^{2}
	\] 
	near $L_\pm$.
\end{lem}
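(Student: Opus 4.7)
My plan is to compute $\widehat{\rho}H_G\mu$ directly from the intrinsic identity $H_G\mu = 2g^{-1}(d\mu,\cdot)$, combine the result with Lemma~\ref{lem:H_Gsigma} applied to $\widehat{\sigma}$, and extract the principal linear parts at $L_\pm$ modulo $\mathcal{I}(L_\pm)^2$. The two contributions should combine so as to cancel the cross-term between $\widehat{\sigma}$ and $\mu$, leaving the eigenvector-like structure $\pm 2\varkappa(\mu \pm 4\widehat{\sigma})$.

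First I would exploit the $g^{-1}$-orthogonal decomposition ${^bT^*_X M} = \RR(d\tau/\tau) \oplus \RR\omega \oplus \Pi$ from Section~\ref{subsect:horizon}. Writing $d\mu = \xi_0\omega + \zeta_0$ with $\xi_0|_H = 2\varkappa$ and $\zeta_0|_H = 0$, and an arbitrary $\varpi \in {^bT^*_X M}$ as $\varpi = \sigma(d\tau/\tau) + \xi\omega + \zeta$, I would polarize \eqref{eq:Ginomega} to obtain
\[
g^{-1}(d\mu,\varpi) = \xi_0\,A^{-2}(A^2-\mu)(\sigma - \mu\xi) - k(\zeta_0,\zeta),
\]
using that $\zeta_0 \in \Pi$ is orthogonal to both $d\tau/\tau$ and $\omega$. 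Doubling and multiplying by $\widehat{\rho}$, the following facts identify the lowest-order terms at $L_\pm$: $\xi_0 - 2\varkappa \in \mathcal{I}(L_\pm)$ and $A^{-2}(A^2-\mu)-1 \in \mathcal{I}(L_\pm)$ (both vanish on $H$); $\widehat{\rho}\sigma = \widehat{\sigma}$; and $\widehat{\rho}\xi\big|_{L_\pm} = \mp 1/2$, which follows from $\partial_\sigma G|_{L_\pm} = 2\xi$ together with the sign information $\pm\xi < 0$ on $\Sigma_\pm$ from Lemma~\ref{lem:grestrictedsignature}. Moreover, $\widehat{\rho}\,k(\zeta_0,\zeta)$ lies in $\mathcal{I}(L_\pm)^2$: since $\zeta_0$ vanishes on $H$ it is a smooth multiple of $\mu \in \mathcal{I}(L_\pm)$, while $\widehat{\rho}\zeta$ is, up to a nonvanishing factor of $|\xi|^{-1}$, a normalized fibre variable vanishing on $L_\pm$. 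Collecting everything I would read off
\[
\widehat{\rho}H_G\mu = 4\varkappa\widehat{\sigma} \pm 2\varkappa\mu \quad \text{modulo } \mathcal{I}(L_\pm)^2.
\]

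Next, Lemma~\ref{lem:H_Gsigma} gives $\widehat{\rho}H_G\widehat{\sigma} = \pm(\sgn\varkappa)\beta\,\widehat{\sigma}$, and since $\beta - |\varkappa| \in \mathcal{I}(L_\pm)$ (by Lemma~\ref{lem:H_Grho} and the remark following it) while $\widehat{\sigma} \in \mathcal{I}(L_\pm)$, this reduces to $\widehat{\rho}H_G\widehat{\sigma} = \pm\varkappa\widehat{\sigma}$ modulo $\mathcal{I}(L_\pm)^2$. Taking $\pm 4$ times this identity and adding it to the expression for $\widehat{\rho}H_G\mu$ cancels the $4\varkappa\widehat{\sigma}$ cross-term and leaves $8\varkappa\widehat{\sigma} \pm 2\varkappa\mu = \pm 2\varkappa(\mu \pm 4\widehat{\sigma})$ modulo $\mathcal{I}(L_\pm)^2$. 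Substituting $(\sgn\varkappa)\beta$ for $\varkappa$ in the leading coefficient introduces only an additional error of the form $(\varkappa - (\sgn\varkappa)\beta)(\mu \pm 4\widehat{\sigma})$, which is again in $\mathcal{I}(L_\pm)^2$, and produces the claimed identity.

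The main obstacle is bookkeeping: verifying that each correction is genuinely quadratic, rather than merely linear, in $\mathcal{I}(L_\pm)$. In particular, the sign $\widehat{\rho}\xi\big|_{L_\pm} = \mp 1/2$ — dictated by the opposing sign conventions of $\Sigma_\pm$ and $\partial_\sigma G$ in Lemma~\ref{lem:grestrictedsignature} — is the precise origin of the asymmetric sign pattern $\mu \pm 4\widehat{\sigma}$ on $L_\pm$. Once that sign is pinned down and $\widehat{\rho}\,k(\zeta_0,\zeta)$ is recognized as a product of two elements of $\mathcal{I}(L_\pm)$, the rest is routine algebraic manipulation.
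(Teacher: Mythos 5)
Your proposal is correct and follows essentially the same route as the paper: your polarized formula for $2g^{-1}(d\mu,\cdot)$ is exactly the paper's expression \eqref{eq:H_Gmualternate}, and you then use the same facts ($\xi_0|_H=2\varkappa$, $-2\widehat{\rho}\xi=\pm1$ at $L_\pm$, the $k(\zeta_0,\zeta)$ term being quadratic) before combining with Lemma \ref{lem:H_Gsigma}. The only cosmetic difference is that the paper phrases the leading coefficient directly as $2(\sgn\varkappa)\beta$ rather than $2\varkappa$, a substitution you justify correctly at the end.
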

\begin{proof}
	One way to see this is to write
	\begin{equation} \label{eq:H_Gmualternate}
	H_G\mu =2A^{-2}(A^2-\mu)\xi_0\sigma - 2A^{-2}(A^2-\mu)\mu \xi \xi_0 - 2k(\zeta,\zeta_0).
	\end{equation}
	 Since $\xi_0|_{H} = 2\varkappa$ along $H$, it follows that $\xi_0 = 2(\sgn \varkappa)\beta + \mathcal{I}(L_\pm)$ near $L_\pm$ when lifted to $^bS^*_XM$. But $-2\widehat{\rho}\xi = \pm1$ at $L_\pm$, so after multiplication by $\widehat{\rho}$ this yields 
	\[
	\widehat{\rho}H_G \mu = 2(\sgn \varkappa)\beta(\pm \mu + 2\widehat{\sigma}) + \mathcal{I}(L_\pm)^2
	\] 
	near $L_\pm$. It remains to combine this with Lemma \ref{lem:H_Gsigma}.
\end{proof}

Finally, consider the action of $\widehat{\rho}H_G$ on $\widehat{K}$. By construction, the vanishing of $\zeta$ defines the subbundle $\mathrm{span}(d\tau/\tau,\omega)\subset {^b}T^*_X M$. Let 
\begin{equation} \label{eq:Omega}
\Omega = \left(\mathrm{span}(d\tau/\tau,\omega)\setminus0\right)/\RR_+ \subset {^b}S^*_XM.
\end{equation}
Then $\widehat{K}$ has nondegenerate quadratic vanishing along $\Omega$. It is now convenient to work locally, where one may choose a $k$-orthonormal basis of sections $(\omega,E^A)$ for $^bT^*X$. Throughout, capitalized indices will always range over $A = 1,\ldots,n-1$. If 
\[
\varpi = \sigma(d\tau/\tau)+ \xi \omega + \zeta_A E^A,
\] 
then $K(\varpi) = \delta^{AB}\zeta_A\zeta_B$. In canonical coordinates, $\zeta_A = a^{i}_A\eta_i$ where $a^{i}_A = k(E^A,dy^i)$. Therefore $a^{i}_A \partial_{y^i}$ is tangent to $H$, since $E^A$ is $k$-orthogonal to $d\mu$ along $H$. \begin{rem}
	Even though $\zeta_A$ is a priori only a (locally defined) function on $^bT^*_XM$, its Hamilton vector field can still be defined by choosing an arbitrary extension to $^bT^*M$; by tangency, its restriction to $^bT^*_XM$ is independent of the extension.
\end{rem}
 In particular, $H_{\zeta_A}$ as a vector field on $^bT^*_XM$ is just
\begin{equation} \label{eq:HzetaA}
H_{\zeta_A} = a^i_A \partial_{y^i} - \partial_{y^i}(a_A^j)\eta_j \partial_{\eta_i},
\end{equation}
which is tangent to $^bT^*_H X$.

\begin{lem} \label{lem:H_Ggamma}
The function $\widehat{K}$ satisfies 
\[
\widehat{\rho}H_G \widehat{K} = \pm 2(\sgn \varkappa)\beta \widehat{K}+ \mathcal{I}(S^*_H X)\cdot \mathcal{I}(\Omega)
\]
near $L_\pm$.
\end{lem}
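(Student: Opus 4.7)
The plan is to apply the Leibniz rule, extract the principal term, and show the remainder lies in $\mathcal{I}(S^*_H X)\cdot\mathcal{I}(\Omega)$ by checking membership in each ideal separately and then invoking transversality. The decomposition
\[
\widehat{\rho}H_G\widehat{K} = 2\widehat{K}(H_G\widehat{\rho}) + \widehat{\rho}^{\,3} H_G K
\]
combined with $H_G\widehat{\rho} = \pm(\sgn\varkappa)\beta$ near $L_\pm$ from Lemma \ref{lem:H_Grho} immediately produces the claimed principal term $\pm 2(\sgn\varkappa)\beta\widehat{K}$, so the task reduces to showing $\widehat{\rho}^{\,3} H_G K \in \mathcal{I}(S^*_H X)\cdot\mathcal{I}(\Omega)$.

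The $\mathcal{I}(\Omega)$ membership is immediate: since $K = \delta^{AB}\zeta_A\zeta_B$ in the local orthonormal frame, one has $\widehat{\rho}^{\,3} H_G K = 2\delta^{AB}\widehat{\zeta}_A(\widehat{\rho}^{\,2} H_G \zeta_B)$, which is a multiple of the generators $\widehat{\zeta}_A$ of $\mathcal{I}(\Omega)$. For the $\mathcal{I}(S^*_H X)$ membership I would instead use that $K = F - G$ for the explicit polynomial $F = A^{-2}\sigma^2 + 2A^{-2}(A^2-\mu)\xi\sigma - A^{-2}(A^2-\mu)\mu\xi^2$ appearing in \eqref{eq:Ginomega}, so that $H_G K = H_G F$. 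Expanding $H_G F$ by the Leibniz rule and using that $H_G\sigma \equiv 0$ on $^bT^*_XM$, every resulting summand manifestly carries a factor of $\sigma$, of $\mu$, or of $H_G\mu$; after multiplying by $\widehat{\rho}^{\,3}$ these become factors of $\widehat{\sigma}$, of $\mu$, or of $\widehat{\rho}H_G\mu$. The first two are defining functions of $S^*_H X$, and the crucial observation is that the last one also lies in $\mathcal{I}(S^*_H X)$: from \eqref{eq:H_Gmualternate} the first two summands of $H_G\mu$ carry factors of $\sigma$ or $\mu$, while the remaining piece $-2k(\zeta,\zeta_0)$ vanishes on $\{\mu=0\}$ since $\zeta_0|_H = 0$ by Lemma \ref{lem:dmucausal}. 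Thus $\widehat{\rho}^{\,3} H_G K \in \mathcal{I}(S^*_H X)\cap\mathcal{I}(\Omega)$.

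To conclude I would invoke transversality: near $L_\pm$ one can choose local coordinates on $^bS^*_X M$ in which $S^*_H X = \{\mu = 0,\, \widehat{\sigma} = 0\}$ and $\Omega = \{\widehat{\zeta}_A = 0\text{ for all }A\}$ are cut out by disjoint collections of coordinates, so the two submanifolds intersect cleanly. Consequently $\mathcal{I}(S^*_H X)\cap\mathcal{I}(\Omega) = \mathcal{I}(S^*_H X)\cdot\mathcal{I}(\Omega)$ locally, and the proof is complete. The main obstacle is essentially organizational: one needs to recognize that every summand of $H_G F$ produces a factor in $\mathcal{I}(S^*_H X)$ after rescaling, with the only delicate case being the term $-A^{-2}(A^2-\mu)\xi^2 H_G\mu$, which requires the identity $\zeta_0|_H = 0$ to pull $\widehat{\rho}H_G\mu$ into $\mathcal{I}(S^*_H X)$.
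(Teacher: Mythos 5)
Your proof is correct, but the treatment of the error term follows a genuinely different route from the paper. The extraction of the principal term $\pm 2(\sgn\varkappa)\beta\widehat{K}$ is identical, and both arguments reduce to the same quantity ($H_GK = H_GG_0$ in your version via $H_GG=0$; $H_GK=H_{G_0}K$ in the paper via $H_KK=0$ --- these are the same function). The paper then exhibits product-ideal membership in a single stroke: writing $H_{G_0}K = -2\delta^{AB}\zeta_A H_{\zeta_B}G_0$, it uses $G_0 \in \mathcal{I}({^bT^*_HX})$ together with the tangency of $H_{\zeta_B}$ to ${^bT^*_HX}$ (from \eqref{eq:HzetaA}) to place $\widehat{\rho}^{\,2}H_{\zeta_B}G_0$ in $\mathcal{I}(S^*_HX)$, while $\widehat{\zeta}_A \in \mathcal{I}(\Omega)$, so each summand is visibly a product of one factor from each ideal. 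You instead verify membership in the two ideals separately --- the $\mathcal{I}(\Omega)$ half by the same factorization through $\widehat{\zeta}_A$, and the $\mathcal{I}(S^*_HX)$ half by a direct Leibniz expansion of $H_GG_0$ using $H_G\sigma = 0$ on ${^bT^*_XM}$ and the observation that $\widehat{\rho}H_G\mu \in \mathcal{I}(S^*_HX)$ because $\zeta_0|_H=0$ --- and then upgrade $\mathcal{I}(S^*_HX)\cap\mathcal{I}(\Omega)$ to $\mathcal{I}(S^*_HX)\cdot\mathcal{I}(\Omega)$ by transversality. That last step is legitimate and is the one genuinely new ingredient: near $L_\pm$ the functions $(\mu,\widehat{\sigma},\widehat{\zeta}_A)$ have independent differentials (the same fact the paper uses to make $\rho_\pm$ a nondegenerate quadratic defining function), so a Hadamard/Taylor argument in these coordinates gives intersection equals product; it would be worth writing that two-line argument out rather than just invoking ``cleanly.'' The trade-off: the paper's factorization is more economical, avoiding the ideal-theoretic step entirely, while your version dispenses with the tangency of $H_{\zeta_B}$ to ${^bT^*_HX}$ for the $S^*_HX$ half and isolates the only delicate term, $-A^{-2}(A^2-\mu)\xi^2H_G\mu$, explicitly. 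One cosmetic caveat, shared with the paper's own computations: since $\xi$, $\zeta_A$, $A$, $\omega$ are a priori defined only over the boundary, the Leibniz expansion of $H_GG_0$ on ${^bT^*_XM}$ should be read via the paper's remark that tangency of $H_G$ to ${^bT^*_XM}$ makes such restrictions independent of the chosen extensions.
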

\begin{proof}
	Since $\widehat{\rho}H_G \widehat{K}  = \pm 2(\sgn \varkappa)\beta \widehat{K} + \widehat{\rho}^{\,3} H_G K$, it suffices to examine $H_G K$. Write $G = G_0 - K$, where 
	\[
	G_0 = A^{-2}\sigma^2 + 2A^{-2}(A^2-\mu)\xi\sigma - A^{-2}(A^2 - \mu)\mu \xi^2.
	\]
	Since $^bT^*_H X =\{\tau = 0, \mu = 0, \sigma = 0\}$, it follows that $G_0 \in \mathcal{I}(^bT^*_H X)$. Working locally in the notation of the preceding paragraph,
	\[
	H_G K = H_{G_0} K = -2\delta^{AB} \zeta_A H_{\zeta_B} G_0.
	\]
	The tangency of $H_{\zeta_B}$ to $^bT^*_H X$ implies that $\widehat{\rho}^{\,2} H_{\zeta_B} G_0 \in \mathcal{I}(^bS^*_H X)$, whereas $\widehat\zeta_A \in \mathcal{I}(\Omega)$. 
	\end{proof}

Let $\rho_\pm = \digamma^{-1}\widehat K + (\mu \pm 4\hspace{.05em}\widehat{\sigma})^2 + \widehat{\sigma}^2$, viewed as a function on $^bS^*_X M$. Here $\digamma>0$ is a yet undetermined constant which will be taken large. The vanishing of $\rho_\pm$ near $L_\pm$ defines $L_\pm$, and moreover $\rho_\pm$ attains a non-degenerate minimum along $L_\pm$. Indeed, the quadratic form $(\mu\pm 4\widehat{\sigma})^2+ \widehat{\sigma}^2$ is positive definite in $(\mu,\widehat{\sigma})$, and $\digamma^{-1}\widehat{K}$ is positive definite in $\widehat{\zeta}$.

\begin{proof}[Proof of Theorem \ref{theo:bdynamics}] Combining Lemmas \ref{lem:H_Gsigma}, \ref{lem:H_Gmuplusigma}, \ref{lem:H_Ggamma} above shows that
	\[
	\pm (\sgn\varkappa)\widehat{\rho}H_G\rho_\pm \geq 2\beta \rho_\pm + \digamma^{-1} \mathcal{I}(S^*_HX)\cdot \mathcal{I}(\Omega) + \mathcal{I}(L_\pm)^3.
	\]
	A function in $\mathcal{I}(S^*_HX)\cdot \mathcal{I}(\Omega)$ is a finite sum of the form $\sum_k f_k \cdot g_k$,
	where $f_k \in  \mathcal{I}(S^*_HX)$ and $g_k \in \mathcal{I}(\Omega)$. By Cauchy--Schwarz and compactness, for any $\gamma > 0$ there is $C_\gamma > 0$ such that
	\[
	\digamma^{-1}\sum_k |f_k g_k| \leq \digamma^{-1}\left(\gamma \widehat{K} + C_\gamma(\mu^2 + \widehat{\sigma}^2)\right).
	\]
	If $\gamma>0$ is first chosen sufficiently small and $\digamma >0$ is subsequently taken to be sufficiently large, then this error can be absorbed by $2\beta \rho_\pm$, at the expense of replacing the coefficient with $2(1-\delta)\beta$ for any $\delta > 0$. Similarly, the error term in $\mathcal{I}(L_\pm)^3$ can be absorbed by the main term by restricting to a sufficiently small (possibly $\delta$-dependent) neighborhood of $L_\pm$. Thus for any $\delta > 0$,
\begin{equation} \label{eq:quantitativesourcesink}
\pm(\sgn \varkappa) \widehat{\rho}H_G \rho_\pm \geq 2(1-\delta)\beta \rho_\pm
\end{equation}
near $L_\pm$. This shows that $L_+$ is a source and $L_-$ is a sink for the $\widehat{\rho}H_G$ flow within $^bS^*_X M$ if $\varkappa > 0$; when $\varkappa < 0$ the source/sink behavior is reversed. If $\varkappa$ is constant, then $\beta$ in \eqref{eq:quantitativesourcesink} can be replaced with $|\varkappa|$.
\end{proof}
As remarked in the paragraph following the statement of Theorem \ref{theo:bdynamics}, the global behavior of the $\widehat{\rho}H_G$ flow (not necessarily restricted to $^bS^*_X M$) is more complicated, and depends on the sign of $\varkappa$: at $L_\pm$,
\[
 \left(\tau^{-1}\widehat{\rho}H_G\tau\right)|_{L_\pm} = \left(\widehat{\rho}\,\partial_\sigma G\right)|_{L_\pm} = \mp 1,
\]
so infinitesimally, there is a stable direction transverse to $^bS^*_X M$ at $L_+$, and an unstable direction at $L_-$.

\subsection{More general b-horizons} \label{subsect:generalhorizon}

Definition \ref{defi:killing1} of a b-horizon corresponds in the stationary setting to a Killing horizon generated by $T$. But already in the stationary case it is important to consider more general Killing vector fields $K$, as these arise naturally for rotating spacetimes. We can treat this case as well, but only under serious additional symmetry hypotheses (see the last item in Definition \ref{def:bhorizonrotating}).

 In general, $\tau\partial_{\tau}$ is replaced with a section $K$ of the affine space $\tau \partial_{\tau} + {^b}TX$ (keeping in mind that ${^b}TX$ is defined relative to a choice of boundary defining function). Given $K$, define $\mu = \mu_K = g(K,K)$. If $K = \tau\partial_{\tau} + V$ for some section $V$ of ${^b}TX$, then
\[
\mu = A^2 - h(W-V,W-V),
\]
where $W$ is the usual shift vector. When appropriate, $V$ will be considered as a vector field on $X$.
\begin{defi} \label{def:bhorizonrotating}
A subset $H \subset X$ is called a {\bf b-horizon generated by $K = \tau\partial_{\tau} + V$} if $H$ is a compact connected component of $\{\mu=0\}$ satisfying the following conditions.
\begin{enumerate} \itemsep6pt

	\item There is a function $\varkappa : H \rightarrow \RR$, called the {\bf surface gravity}, such that
	\begin{equation} \label{eq:Vhorizon}
	d\mu^\sharp = 2\varkappa K \text{ on } H.
	\end{equation}

	\item $V\mu$ vanishes quadratically along $H$,
	
	\item $V\varkappa = 0$ along $H$,
	
	\item There is a Riemannian metric $\underline{\ell}$ on $H$ such that $V|_H$ is Killing with respect to $\underline\ell$.

\end{enumerate}
	If $\varkappa$ is nowhere vanishing, then the b-horizon generated by $K$ is said to be {\bf nondegenerate}.
\end{defi}

Assume that $H$ is a nondegenerate b-horizon generated by $K$. The spatial components of \eqref{eq:Vhorizon} in adapted coordinates read
\[
\partial_{y^i}\mu = 2\varkappa (W^j - V^j)h_{ij}.
\]
Since $W-V \neq 0$ near $H$ and the surface gravity never vanishes, $H$ is embedded and $W-V$ is a nonzero normal to $H$. 

\begin{rem} \begin{inparaenum}[(1)] \item 
If $V$ is identically zero, then when expressed covariantly, the $d\tau/\tau$ component of \eqref{eq:Vhorizon} is trivial. More generally, the $d\tau/\tau$ component of $K^\flat$ satisfies
\[
\left<\tau\partial_\tau,K^\flat\right> = A^2 - h(W,W-V) 
\] 
on $H$, so if this vanishes, then $h(V,W-V) = 0$. Thus \eqref{eq:Vhorizon} implies that $V$ is automatically tangent to $H$, but we require the stronger condition that $V\mu$ actually vanishes quadratically along $H$. 

\item The existence of $\underline{\ell}$ is a nontrivial restriction. Since $H$ is compact and connected, the existence of such a metric is equivalent to the statement that $V|_{H}$ is an $\RR$-linear combination of mutually commuting vector fields on $H$ with periodic flows \cite{lynge1973sufficient}.  \end{inparaenum}
\end{rem}

Let $\omega \in {^bT^*X}$ be dual to $W - V \in {^b}TX$ with respect to $h$. As before, any $\varpi \in {^bT^*_X M}$ can be written uniquely as $\varpi = \sigma (d\tau/\tau) + \xi \omega + \zeta$, where $\zeta \in \Pi$ is $k$-orthogonal to $\omega$. Then,
\begin{equation} \label{eq:GinomegaV}
G(\varpi) = A^{-2}(\sigma+\eta_V)^2 + 2A^{-2}  (A^2 -\mu) \xi (\sigma+\eta_V) - A^{-2}(A^2-\mu)\mu \xi^2 - k(\zeta,\zeta),
\end{equation}
where $\eta_V : {^b}T^*_X M \rightarrow \RR$ is given by $\eta_V = \left< V, \cdot\right>$.
\begin{rem}
In this section it will sometimes be convenient to choose adapted coordinates $(\tau, y^i)$ near $H$ such that $y^n = \mu$ and $k(dy^A,\omega) = 0$, recalling $A = 1,\ldots,n-1$. In other words, $dy^A \in \Pi$; this is possible since the vector field $W-V$ dual to $\omega$ is transverse to $H$.
\end{rem}

  From \eqref{eq:GinomegaV} it is clear that $\eta_V$ plays an important role.
 Note that $\eta_V$ vanishes at $^bN^*H$, since $d\mu$ spans $^bN^*H$ over $H$ and $\eta_V(d\mu) = V\mu$, with the latter vanishing on $H$.

\begin{lem} \label{lem:etaVproperties} 
	The function $\eta_V$ has the following properties.
	\begin{enumerate} \itemsep6pt
		\item $H_{\eta_V}$ is tangent to $^bN^*H$ and $^bT^*_H X$.
		\item $H_{\eta_V}\xi$ vanishes along $^bN^*H$.
		\item The restriction of $H_{G}\eta_V$ to $^bS^*_XM$ vanishes quadratically along $^bN^*H$.
	\end{enumerate}
\end{lem}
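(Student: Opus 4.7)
The first statement is a tangency check. In adapted coordinates $(\tau, y^i)$, the function $\eta_V = V^i(y)\eta_i$ is independent of both $\tau$ and $\sigma$, so
\[
H_{\eta_V} = V^i \partial_{y^i} - \partial_{y^i}(V^j)\eta_j\, \partial_{\eta_i}
\]
on ${^bT^*_X M}$, with vanishing $\tau$ and $\sigma$ components. Hence $H_{\eta_V}\mu = V\mu$ vanishes on $H$ by condition (2) of Definition \ref{def:bhorizonrotating}, giving tangency to ${^bT^*_H X}$. Using the coordinates of the preceding remark (so $y^n = \mu$ and ${^bN^*H} = \{\tau = \sigma = y^n = \eta_A = 0\}$), the quadratic vanishing of $V^n = V\mu$ forces $\partial_{y^j}V^n|_H = 0$ for all $j$, whence $H_{\eta_V}\eta_A|_{{^bN^*H}} = -\partial_{y^A}(V^n)\eta_n = 0$. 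Geometrically, $V$ is tangent to $H$, so the cotangent lift of its flow preserves ${^bN^*H}$.

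For (2), write $\xi = k(\omega,\varpi)/k(\omega,\omega) = \langle W-V,\varpi\rangle/(A^2-\mu)$ and apply $H_{\eta_V}$ via the quotient rule. The Poisson-bracket identity $H_{\eta_V}\eta_Z = \eta_{[V,Z]}$ for a vector field $Z$ on $X$ yields $H_{\eta_V}\langle W-V,\varpi\rangle = \eta([V,W])$, while $H_{\eta_V}(A^2-\mu) = V(A^2) - V\mu$. Restricting to ${^bT^*_H X}$, where $V\mu = 0$,
\[
H_{\eta_V}\xi\big|_{{^bT^*_H X}} = \frac{\eta([V,W])}{A^2} - \frac{\eta(W-V)\, V(A^2)}{A^4}.
\]
Decompose $[V,W]|_H = \lambda(W-V) + Y$ with $h(Y, W-V) = 0$. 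Differentiating the identity $W\mu = 2\varkappa A^2$ on $H$ (the $\tau\partial_\tau$-component of $d\mu^\sharp = 2\varkappa K$) along $V$ and using $V\varkappa|_H = 0$ from condition (3) gives $h([V,W], W-V)|_H = V(A^2)$, so $\lambda = V(A^2)/A^2$. Substituting back cancels the two $\lambda$-contributions, leaving $H_{\eta_V}\xi|_{{^bT^*_H X}} = \eta(Y)/A^2$. To conclude $Y = 0$, the Killing hypothesis (4) of Definition \ref{def:bhorizonrotating} is invoked: it ensures $[V,W]|_H$ has no tangential-to-$H$ component transverse to the generator $W-V$. Regardless, since $Y$ is tangent to $H$, $\eta(Y)$ automatically vanishes on ${^bN^*H}$, so $H_{\eta_V}\xi \in \mathcal{I}({^bN^*H})$ in any case.

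For (3), invoke Poisson antisymmetry $H_G \eta_V = -H_{\eta_V}G$ and decompose $G$ using \eqref{eq:GinomegaV}. The functions $\sigma$, $\eta_V$, $\mu$, and $\zeta$ all lie in $\mathcal{I}({^bN^*H})$: for $\eta_V = \xi\, h(V, W-V) + V^A\zeta_A$ this uses $h(V, W-V)|_H = 0$ (implied by the $\tau\partial_\tau$-component of condition (1)) and $\zeta|_{{^bN^*H}} = 0$. Apply $H_{\eta_V}$ termwise: the $(\sigma+\eta_V)^2$ piece is immediately in $\mathcal{I}^2$ since $H_{\eta_V}(\sigma+\eta_V) = 0$; the cross term $(A^2-\mu)\xi(\sigma+\eta_V)$ produces $(\sigma+\eta_V)$ times a coefficient vanishing on ${^bN^*H}$ by part (2) combined with $V(\mu/A^2)|_H = 0$; the $(A^2-\mu)\mu\xi^2$ piece uses the quadratic vanishing of $V\mu$; and $k(\zeta,\zeta)$ uses $H_{\eta_V}\zeta_A \in \mathcal{I}({^bN^*H})$ from the tangency in (1). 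Summing yields $H_{\eta_V}G \in \mathcal{I}({^bN^*H})^2$, and multiplication by $\widehat\rho^{\,2}$ transports this to quadratic vanishing along the image $L \subset {^bS^*_X M}$. The main obstacle is the identification $Y = 0$ in part (2): this is the single place in the argument where the Killing condition (4) is used essentially, while all remaining steps rest only on conditions (1)--(3).
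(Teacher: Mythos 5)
Parts (1) and (3) of your argument are correct and run parallel to the paper's: for (1) the paper instead deduces tangency to ${^bN^*H}$ from the characterization ${^bN^*H}=\{\tau=\sigma=\mu=G=\eta_V=0\}$ after first checking that $H_{\eta_V}G$ vanishes there, while you verify tangency directly in the coordinates $y^n=\mu$ using the quadratic vanishing of $V\mu$; both are fine. Your computation in (2) is also correct as far as it goes: the identity $W\mu=2\varkappa A^2$ on $H$, differentiated along $V$ with $V\varkappa=0$ and the quadratic vanishing of $V\mu$, does give $h([V,W],W-V)|_H=V(A^2)$, and hence $H_{\eta_V}\xi|_{{^bT^*_HX}}=\eta(Y)/A^2$ with $Y$ the $H$-tangential part of $[V,W]|_H$.

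The gap is the final step of (2): hypothesis (4) of Definition \ref{def:bhorizonrotating} does \emph{not} imply $Y=0$. That hypothesis is an intrinsic condition on the restriction $V|_H$ alone (existence of some metric $\underline{\ell}$ on $H$ for which $V|_H$ is Killing); it involves neither $W$ nor the behavior of $V$ transverse to $H$, so it gives no control on $[V,W]|_H$. Concretely, in coordinates with $y^n=\mu$ and $dy^A\in\Pi$ one finds $Y^A=-2\varkappa A^2\,\partial_\mu V^A|_H$, and nothing in conditions (1)--(4) forces $\partial_\mu V^A|_H=0$. So item (2) as stated is not proved by your argument; what you do prove is the weaker statement $H_{\eta_V}\xi\in\mathcal{I}({^bN^*H})$ (covectors in ${^bN^*H}$ annihilate the $H$-tangent vector $Y$). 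Your proof of (3) only invokes this weaker fact, so (3) stands, and indeed this is all that is used later in the paper ((2) enters only through (3)); for comparison, the paper's own proof of (2), which applies $H_{\eta_V}$ to $\xi|_{{^bT^*_HX}}=2\varkappa\eta_n$ using $V\varkappa=0$ and the quadratic vanishing of $V\mu$, likewise only controls the $\eta_n$-component and leaves the same $\partial_\mu(V^A)\eta_A$ term, so its conclusion too is really the vanishing along ${^bN^*H}$. Finally, your closing remark misplaces where condition (4) is used essentially: it plays no role in this lemma, but is needed later, in Lemma \ref{lem:H_Ggammarotating}, where $\mathcal{L}_{V|_H}\underline{\ell}=0$ is what kills the tangential fiber terms.
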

\begin{proof} \begin{inparaenum}[(1)] \item To see that $H_{\eta_V}$ is tangent to $^bT^*_H X$, note that $H_{\eta_V} \sigma$ vanishes on $^bT^*_XM$, and $H_{\eta_V}\mu = \eta_V(d\mu) = V\mu$, which vanishes at $^bT^*_H X$. Now write
		\[
		^bN^*H = \{\tau = 0,\, \sigma =0,\,\mu=0,\,\zeta =0\}.
		\] 
		Since $H_{\eta_V}$ is tangent to $^bT^*_H X$, it follows from \eqref{eq:GinomegaV} that $H_G \eta_V$ vanishes along $^bN^*H$, noting that $k(\zeta,\zeta)$ vanishes quadratically there. Now observe that
		\[
		^bN^*H = \{\tau = 0,\, \sigma =0,\, \mu = 0, \, G = 0,  \, \eta_V = 0\},
		\]
		hence $H_{\eta_V}$ is tangent to $^bN^*H$.
		
		\item  Because $H_{\eta_V}$ is tangent to $^bT^*_HX$, in order to compute $H_{\eta_V}\xi$ it suffices to first restrict $\xi$ to $^bT^*_HX$. But in canonical coordinates $(\tau,y^i,\sigma,\eta_i)$ with $y^n = d\mu$ and $dy^A \in \Pi$,
		\[
		\xi|_{^bT^*_HX} = 2\varkappa \eta_n.
		\]
		Under the assumption that $V\mu$ vanishes quadratically along $H$ and $V\varkappa=0$, this function is annihilated by $H_{\eta_V}$ at $^bN^*H$. To see this, observe that $\eta_A = 0$ at $^bN^*H$ for $A=1,\ldots,n-1$, so
		\[
		H_{\eta_V} = V^i \partial_{y^i} -\partial_{y^i}(V^j)\eta_j \partial_{\eta_i} = V^i \partial_{y^i} - \partial_{y^i}(V^n)\eta_n \partial_{\eta_i}
		\]
		at $^bN^*H$. But $V^n = V\mu$ vanishes quadratically along $H$, so the second term is zero. On the other hand, the first term annihilates $2\varkappa \eta_n$ since $V\varkappa = 0$.
		
		\item From what has been shown and \eqref{eq:GinomegaV}, \eqref{eq:HzetaA}, it follows that $H_{G}\eta_V$ vanishes quadratically along $^bN^*H$ within $^bS^*_XM$. \end{inparaenum}\end{proof}

Lemmas \ref{lem:grestrictedsignature}, \ref{lem:onecomponent} do not hold as stated, although versions are true if the hyperplane $^b T^*X \subset {^bT^*_X M}$ is replaced by $\{ \sigma = -\eta_V\}$. In particular, $\Sigma_\pm \cap {^bT^*_X  M}$ can intersect $\{\mu > 0\}$. In addition, $H_G$ is no longer radial at $^bN^*H\setminus 0$:

\begin{lem} \label{lem:HGnotradial} Assume that $V\mu$ vanishes quadratically along $H$. If $\eta \in {^bN^*H}$, then $H_G(\eta) = 2 \xi \left(\varkappa \,\eta_i \partial_{\eta_i} + V^i \partial_{y^i}\right)$, where $\eta = \eta_i dy^i  = \xi \omega$.
\end{lem}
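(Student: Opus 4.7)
The strategy is to compute the coefficients of $H_G$ at $\eta$ directly in canonical coordinates $(\tau, y^i, \sigma, \eta_i)$, using \eqref{eq:GinomegaV} for $G|_{{^b}T^*_XM}$ and the adapted chart $y^n = \mu$ introduced in the Remark preceding the lemma. At $\eta = \xi\omega \in {^bN^*H}$ one has $\tau = \sigma = \mu = 0$, $\zeta = 0$, and $\eta_V|_\eta = \xi h(V, W-V) = 0$, the last by tangency of $V$ to $H$ (itself a consequence of the quadratic vanishing of $V\mu$). As in Lemma \ref{lem:HGradial}, the $\tau\partial_\tau$ and $\partial_\sigma$ terms in \eqref{eq:H_G} do not contribute at $\tau = 0$, leaving only the $\partial_{y^i}$ and $\partial_{\eta_i}$ coefficients to compute.

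For $\partial_{\eta_i}G|_\eta$, every term obtained by differentiating \eqref{eq:GinomegaV} in $\eta_i$ carries a factor of $(\sigma + \eta_V)$, $\mu$, or $\zeta$ that vanishes at $\eta$, except $2A^{-2}(A^2-\mu)\xi\,\partial_{\eta_i}(\sigma+\eta_V)|_\eta$. Using $\mu|_H = 0$ and $\partial_{\eta_i}\eta_V = V^i$, this evaluates to $2\xi V^i$. For $\partial_{y^i}G|_\eta$, the main contribution comes from $\partial_{y^i}[-A^{-2}(A^2-\mu)\mu\xi^2]|_\eta$, where only the term in which $\partial_{y^i}$ hits the explicit factor of $\mu$ survives, producing $-\xi^2\partial_{y^i}\mu|_H = -2\varkappa\xi^2\omega_i = -2\varkappa\xi\eta_i$. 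A potentially spurious second contribution arises from $\partial_{y^i}[2A^{-2}(A^2-\mu)\xi(\sigma+\eta_V)]|_\eta = 2\xi^2(\partial_{y^i}V^j)\omega_j|_H$.

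The delicate step is showing that this second contribution vanishes. In the chart $y^n = \mu$ one computes $\omega_n|_H = 1/(2\varkappa)$ and $\omega_A|_H = 0$ for $A = 1,\ldots,n-1$, so $(\partial_{y^i}V^j)\omega_j|_H = (\partial_{y^i}V^n)/(2\varkappa)|_H$. The quadratic vanishing hypothesis on $V\mu$ reads, in these coordinates, $V^n = O((y^n)^2)$, which immediately forces $\partial_{y^i}V^n|_H = 0$ for every $i$. Combining, $-\partial_{y^i}G|_\eta = 2\varkappa\xi\eta_i$, and therefore $H_G(\eta) = 2\xi V^i\partial_{y^i} + 2\varkappa\xi\eta_i\partial_{\eta_i} = 2\xi(\varkappa\eta_i\partial_{\eta_i} + V^i\partial_{y^i})$, as claimed.

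The main obstacle is precisely the cancellation of the correction $2\xi^2(\partial_{y^i}V^j)\omega_j|_H$, for which the quadratic (rather than merely linear) vanishing of $V\mu$ along $H$ is essential; absent this assumption the vertical part of $H_G(\eta)$ would fail to be radial, and the source/sink analysis of Section \ref{subsect:nearLpm} would have to be substantially modified. This explains why the quadratic condition was imposed in item (2) of Definition \ref{def:bhorizonrotating}.
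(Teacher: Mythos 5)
Your proof is correct and follows essentially the same route as the paper: the paper also isolates the extra terms $2\xi\bigl(\partial_{\eta_i}(\eta_V)\partial_{y^i} - \partial_{y^i}(\eta_V)\partial_{\eta_i}\bigr)$ relative to Lemma \ref{lem:HGradial} and kills the $\partial_{\eta_i}$-correction in the chart $y^n=\mu$ by noting $\eta_A=0$ on ${^bN^*H}$ and $V^n=V\mu$ vanishes quadratically along $H$, which is exactly your cancellation of $2\xi^2(\partial_{y^i}V^j)\omega_j$. The only difference is presentational: you re-expand \eqref{eq:GinomegaV} from scratch rather than quoting Lemma \ref{lem:HGradial} for the $V$-independent part.
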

\begin{proof}
	Recall that $\eta_V$ vanishes at $^bN^*H$, so $(\sigma+\eta_V)^2$ vanishes quadratically there. Thus relative to Lemma \ref{lem:HGradial}, there is an additional term
	\[
	2\xi\left(\partial_{\eta_i}(\eta_V)\partial_{y^i} -  \partial_{y^i}(\eta_V)\partial_{\eta_i} \right)
	\]
	at $^bN^*H$. As in Lemma \ref{lem:etaVproperties}, the second term vanishes at $^bN^*H$.
\end{proof}

Even though $H_G$ is not radial there, $^bN^*H$ is still invariant under the $H_G$ flow. To see this, recall from Lemma \ref{lem:etaVproperties} that
\begin{equation} \label{eq:conormaldefinedby}
^bN^*H = \{\tau = 0,\, \sigma =0,\, \mu = 0, \, G = 0,  \, \eta_V = 0\}.
\end{equation}
But $d\mu|_H$ is null, so $H_G\mu = 0$ at $^bN^*H$. It then suffices to note that $H_G {\eta_V}=0$ at $^bN^*H$ by Lemma \ref{lem:etaVproperties}.

\begin{rem}
 Lemma \ref{lem:dmucausal} holds true in this more general setting. Indeed $G(d\mu)$ has the same sign as $-\mu$ since $V\mu$ vanishes quadratically at $H$. Also observe that $\partial_{\sigma} G = 2\xi$ at $^bN^*H$, so $\pm \xi < 0$ on $\mathcal{R}_\pm$, and $(\sgn\varkappa)d\mu \in \Sigma_-$ over $H$. Therefore Lemmas \ref{lem:HGmu}, \ref{lem:HGmunegative} are still true.
\end{rem}

The sets $L_\pm$ are defined as before, and it is subsequently shown that $L_\pm$ is still either a source or sink for the $\widehat{\rho}H_G$ flow within $^bS^*_X M$, completing the proof of Theorem \ref{theo:bdynamics} in this more general setting. 

{\bf  As in Section \ref{subsect:nearLpm}, for the remainder of this section $\widehat{\rho}H_G$ will be considered by restriction as a vector field on $^bS^*_X M$, rather than $^bS^*M$}.

\begin{lem} \label{lem:H_GrhoV} If $V \varkappa =0$, then $\widehat{\rho}$ satisfies $H_G \widehat{\rho}\,|_{L_\pm} = \pm \varkappa$.
\end{lem}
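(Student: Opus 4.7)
The plan is to adapt the argument of Lemma \ref{lem:H_Grho}, accounting for the extra non-radial piece in $H_G|_\eta = 2\xi(\varkappa\,\eta_i\partial_{\eta_i} + V^i\partial_{y^i})$ given by Lemma \ref{lem:HGnotradial}. The main obstacle is showing that this new $V$-contribution adds nothing to $H_G(\partial_\sigma G)|_\eta$, and this is precisely where the hypothesis $V\varkappa = 0$ enters.

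Near $\mathcal{R}_\pm$ on $^bT^*_X M$, $\widehat\rho = |\partial_\sigma G|^{-1}$ and $\partial_\sigma G|_\eta = 2\xi$ has fixed sign, so $H_G\widehat\rho|_\eta = \pm H_G(\partial_\sigma G)|_\eta / (4\xi^2)$. I would first handle the Euler piece $2\xi\varkappa\,\eta_i\partial_{\eta_i}$: applied to $\partial_\sigma G = 2A^{-2}(\sigma + W^j\eta_j)$, it yields $4\xi\varkappa A^{-2}W^j\eta_j|_\eta$, and evaluating at $\eta = \xi\omega$ gives $4\xi^2\varkappa A^{-2} h(W, W-V)|_H$. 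Two short calculations show $h(W, W-V)|_H = A^2$: pairing $d\mu = 2\varkappa K^\flat$ at $H$ with $\tau\partial_\tau$ and using that $d\mu$ annihilates $\tau\partial_\tau$ yields $g(K,\tau\partial_\tau)|_H = 0$, equivalently $h(W,V)|_H = h(V,V)|_H$; combined with the identity $h(W,W) - h(V,V) = A^2$ on $H$ (obtained by expanding $\mu = A^2 - h(W-V, W-V) = 0$), this delivers the desired identity. The Euler piece thus contributes exactly $4\xi^2\varkappa$.

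For the $V$-correction, I would choose adapted coordinates with $y^n = \mu$, so that $d\mu = dy^n$ and $\omega|_H = (2\varkappa)^{-1}dy^n$; then $\eta_n|_\eta = \xi/(2\varkappa)$ while $\eta_A|_\eta = 0$ for $A < n$. Differentiating and evaluating at $\eta$ leaves $\partial_{y^i}(\partial_\sigma G)|_\eta = 2\eta_n \partial_{y^i}(A^{-2}W^n)$. The condition that $V\mu$ vanishes quadratically on $H$ (condition (2) of Definition \ref{def:bhorizonrotating}) implies $V^n|_H = 0$, so $V(\partial_\sigma G)|_\eta = 2\eta_n V^A\partial_{y^A}(A^{-2}W^n)|_H$ involves only tangential derivatives of $A^{-2}W^n$ along $H$. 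A direct use of $d\mu^\sharp = 2\varkappa K$ then shows $W^n|_H = W\mu|_H = 2\varkappa\,g(K,W)|_H = 2\varkappa A^2|_H$ (using $g(K,W) = h(W,W) - h(V,W)$ and the identities already established), so the function $(A^{-2}W^n)|_H$ is identically $2\varkappa$. Its tangential derivative along $V|_H$ is thus $2V\varkappa|_H$, which vanishes by condition (3) of Definition \ref{def:bhorizonrotating}. Collecting everything, $H_G(\partial_\sigma G)|_\eta = 4\xi^2\varkappa$ and therefore $H_G\widehat\rho|_{L_\pm} = \pm\varkappa$.
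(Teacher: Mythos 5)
Your proof is correct and takes essentially the same route as the paper: working in adapted coordinates with $y^n=\mu$, the only new contribution to $H_G(\partial_\sigma G)$ at $^bN^*H$ is $4\xi\,\eta_n V^i\partial_{y^i}(A^{-2}W^n)$, which vanishes because $V|_H$ is tangent to $H$ (from $V\mu$ vanishing quadratically), $(A^{-2}W^n)|_H=2\varkappa$, and $V\varkappa=0$. The only difference is that you make explicit the identity $h(W,W-V)|_H=A^2$ (a consequence of the horizon condition, cf.\ the remark after Definition \ref{def:bhorizonrotating}) needed for the Euler part to still give $4\varkappa\xi^2$, a point the paper leaves implicit.
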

\begin{proof}
	Again, work in canonical coordinates $(\tau,y^i,\sigma,\eta_i)$, where $y^n = \mu$. By tangency to $^bN^*H$ and Lemma \ref{lem:HGnotradial}, relative to Lemma \ref{lem:H_Grho} there is an additional contribution to $H_G (\partial_\sigma G)$ of the form
	\[
	4\xi V^i \partial_{y^i} (\varkappa)\eta_n
	\]
at $^bN^*H$, cf. Lemma \ref{lem:etaVproperties}. This vanishes since $V\varkappa = 0$.
\end{proof}

As before, let $\beta = |H_G \widehat{\rho}|$. The calculation of $\widehat{\rho}H_G \widehat{\sigma}$ in Lemma \ref{lem:H_Gsigma} goes through unchanged, namely 
\[
\widehat{\rho}H_G \widehat{\sigma} = \pm(\sgn\varkappa)\beta\widehat{\sigma}
\]
near $L_\pm$. However, Lemmas \ref{lem:H_Gmuplusigma}, \ref{lem:H_Ggamma} both need to be modified. As a preliminary, define $\widehat{\eta}_V = \widehat{\rho}\eta_V$. Then, 
\[
H_G\eta_V=  \pm (\sgn \varkappa) \beta\widehat{\eta}_V +\mathcal{I}(L_\pm)^{2}
\] 
since $H_G \eta_V$ vanishes quadratically (within $^bS^*_X M$) along $^b N^*H$ by Lemma \ref{lem:etaVproperties}.

\begin{lem}
The function $\mu \pm 4\widehat{\sigma} \pm 4\widehat{\eta}_V$ satisfies 
\[
\widehat{\rho}H_G (\mu \pm 4\widehat{\sigma} \pm 4\widehat{\eta}_V) =  \pm2(\sgn\varkappa)\beta (\mu \pm 4\widehat{\sigma} \pm 4\widehat{\eta}_V) + \mathcal{I}(L_\pm)^2
\]
near $L_\pm$.
\end{lem}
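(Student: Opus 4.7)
The plan is to parallel the proof of Lemma~\ref{lem:H_Gmuplusigma}, adapted to the presence of $V$: first compute $\widehat{\rho}H_G\mu$ modulo $\mathcal{I}(L_\pm)^2$, then combine with the formulas for $\widehat{\rho}H_G\widehat{\sigma}$ and $\widehat{\rho}H_G\widehat{\eta}_V$. The guiding heuristic is that in the rotating case the expression \eqref{eq:GinomegaV} differs from \eqref{eq:Ginomega} only by the substitution $\sigma \mapsto \sigma + \eta_V$, which suggests that the natural analog of $\widehat{\sigma}$ is $\widehat{\sigma} + \widehat{\eta}_V$; the statement of the lemma then mirrors Lemma~\ref{lem:H_Gmuplusigma} verbatim under this substitution.

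Concretely, since $\mu$ is a function on $M$, one has $H_G\mu = 2g^{-1}(d\mu,\varpi)$. Writing $d\mu = \xi_0\omega + \zeta_0$ with $\xi_0|_H = 2\varkappa$ and $\zeta_0|_H = 0$, and $\varpi = \sigma(d\tau/\tau) + \xi\omega + \zeta$, I would polarize \eqref{eq:GinomegaV} to obtain
\[
H_G\mu = 2A^{-2}(A^2-\mu)\xi_0(\sigma + \eta_V) - 2A^{-2}(A^2-\mu)\mu\,\xi\xi_0 - 2k(\zeta,\zeta_0) + 2A^{-2}(V\mu)\bigl[(\sigma + \eta_V) + (A^2-\mu)\xi\bigr].
\]
The last bracket is acceptable because $V\mu \in \mathcal{I}(H)^2$ by hypothesis, while the factor $\widehat{\rho}[(\sigma + \eta_V) + (A^2-\mu)\xi]$ is bounded near $L_\pm$. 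The term $-2\widehat{\rho}\,k(\zeta,\zeta_0) = -2k(\widehat{\zeta},\zeta_0)$ is also in $\mathcal{I}(L_\pm)^2$ since both $\widehat{\zeta}$ and $\zeta_0$ vanish on $L_\pm$. Using $\xi_0 = 2\varkappa + \mathcal{I}(H)$, $2\widehat{\rho}\xi|_{L_\pm} = \mp 1$, and $A^{-2}(A^2-\mu)|_{L_\pm} = 1$, this yields
\[
\widehat{\rho}H_G\mu \equiv 4\varkappa(\widehat{\sigma} + \widehat{\eta}_V) \pm 2\varkappa\mu \pmod{\mathcal{I}(L_\pm)^2}.
\]

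The identity $\widehat{\rho}H_G\widehat{\sigma} = \pm(\sgn\varkappa)\beta\widehat{\sigma}$ from Lemma~\ref{lem:H_Gsigma} goes through unchanged. For $\widehat{\eta}_V$, the product rule combined with Lemma~\ref{lem:etaVproperties}(3) gives $\widehat{\rho}H_G\widehat{\eta}_V = (H_G\widehat{\rho})\widehat{\eta}_V + \widehat{\rho}^{\,2}H_G\eta_V \equiv \pm(\sgn\varkappa)\beta\widehat{\eta}_V \pmod{\mathcal{I}(L_\pm)^2}$. Since $\beta|_{L_\pm} = |\varkappa|$ by Lemma~\ref{lem:H_GrhoV} and both $\widehat{\sigma},\widehat{\eta}_V \in \mathcal{I}(L_\pm)$, the error from replacing $\pm(\sgn\varkappa)\beta$ by $\pm\varkappa$ lies in $\mathcal{I}(L_\pm)^2$. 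Summing the three identities gives
\[
\widehat{\rho}H_G(\mu \pm 4\widehat{\sigma} \pm 4\widehat{\eta}_V) \equiv \pm 2\varkappa\mu + 8\varkappa\widehat{\sigma} + 8\varkappa\widehat{\eta}_V = \pm 2\varkappa(\mu \pm 4\widehat{\sigma} \pm 4\widehat{\eta}_V) \pmod{\mathcal{I}(L_\pm)^2},
\]
and rewriting $\pm 2\varkappa$ as $\pm 2(\sgn\varkappa)\beta$---a legal substitution modulo $\mathcal{I}(L_\pm)^2$ since it multiplies an element of $\mathcal{I}(L_\pm)$---yields the desired identity.

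The main obstacle is establishing the polarization identity for $H_G\mu$ cleanly in the $(\sigma,\xi,\zeta)$ decomposition and verifying that all $V\mu$-dependent correction terms fall into $\mathcal{I}(L_\pm)^2$. This is precisely where the hypothesis that $V\mu$ vanishes \emph{quadratically} along $H$ plays an essential role---not merely the weaker first-order vanishing implied by the tangency of $V$ to $H$---without which one would obtain an $\mathcal{I}(L_\pm)$ error that could not be absorbed into the source/sink structure.
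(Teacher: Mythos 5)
Your proposal is correct and follows essentially the same route as the paper: you polarize \eqref{eq:GinomegaV} against $d\mu=\xi_0\omega+\zeta_0$ (the paper records the same extra terms $2A^{-2}(\sigma+\eta_V+(A^2-\mu)\xi)(V\mu)+2A^{-2}(A^2-\mu)\eta_V\xi_0$ relative to \eqref{eq:H_Gmualternate}), discard the $V\mu$ term using its quadratic vanishing along $H$, identify the $\eta_V\xi_0$ term as $4(\sgn\varkappa)\beta\widehat{\eta}_V$ modulo $\mathcal{I}(L_\pm)^2$, and combine with Lemma~\ref{lem:H_Gsigma} and the quadratic vanishing of $H_G\eta_V$ from Lemma~\ref{lem:etaVproperties}. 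The only cosmetic difference is that you carry $\varkappa$ (i.e.\ an extension of it off $H$) and convert to $\pm2(\sgn\varkappa)\beta$ at the end, which is harmless modulo $\mathcal{I}(L_\pm)^2$ since both factors multiply elements of $\mathcal{I}(L_\pm)$ and agree on $L_\pm$.
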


\begin{proof}
	Compared to \eqref{eq:H_Gmualternate}, there is an additional term in $H_G\mu$ of the form
	\[
	2A^{-2}(\sigma+\eta_V + (A^2-\mu)\xi)(V\mu) + 2A^{-2}(A^2-\mu) \eta_V \xi_0.
	\]
	After multiplication by $\widehat{\rho}$, the first term is in $\mathcal{I}(L_\pm)^{2}$, while the second term equals $4(\sgn\varkappa)\beta \widehat{\eta}_V$ modulo $\mathcal{I}(L_\pm)^{2}$. Thus $\widehat{\rho}H_G \mu = 2(\sgn\varkappa)\beta (\pm \mu + 2\widehat{\sigma} + 2\widehat{\eta}_V) + \mathcal{I}(L_\pm)^2$,
	so 
	\[
	\widehat{\rho}H_G (\mu \pm 4\widehat{\sigma} \pm 4\widehat{\eta}_V) =  \pm2(\sgn\varkappa)\beta (\mu \pm 4\widehat{\sigma} \pm 4\widehat{\eta}_V) + \mathcal{I}(L_\pm)^{2}
	\] 
	near $L_\pm$.
\end{proof}

It remains to localize in the remaining fiber directions; for this, it is no longer appropriate to use the function $K$ as in Section \ref{subsect:nearLpm}. Instead, fix any Riemannian metric $\ell$ on $X$ such that 
\[
\ell|_H = d\mu^2 + \underline{\ell},
\]
where $\underline\ell$ is provided by Definition \ref{def:bhorizonrotating}; identify $\ell$ with a fiber metric on $^bTX$. If $(y^i)$ are local coordinates near $H$ with $y^n = \mu$, then
\[
(\ell|_H)_{AB} = \underline{\ell}_{AB}, \quad (\ell|_H)^{AB} = \underline{\ell}^{AB},
\]
where $\underline{\ell}_{AB}$ and $\underline{\ell}^{AB}$ are the components of $\underline{\ell}$ and its inverse with respect to the coordinates $(y^A|_{H})$ on $H$.

Now $\Pi$ is replaced with $\underline{\Pi}$, the $\ell^{-1}$-orthogonal complement of $d\mu$ within $^bT^*X$.  Let $\underline{\zeta}$ be the $\ell^{-1}$-orthogonal projection of $\eta \in {^b}T^*X$ onto $\underline{\Pi}$. The replacement for $K$ is the function $L:{^b}T^*_X M \rightarrow \RR$ defined by
\[
L(\varpi) = \ell^{-1}(\underline{\zeta},\underline{\zeta}).
\]
Similar to the remark preceding Lemma \ref{lem:etaVproperties}, it is possible to choose adapted coordinates $(\tau,y^i)$ where $y^n = \mu$ and $dy^A \in \underline{\Pi}$. The orthogonal projection onto $\underline{\Pi}$ is given in these coordinates by $\underline\zeta = \eta_A dy^A$. In particular,
\begin{equation} \label{eq:Lincoords}
L = \ell^{AB}\eta_A\eta_B.
\end{equation}
The following  lemma exploits that $V|_H$ is Killing with respect to $\underline{\ell}$. Define $\underline{\Omega}$ as in \eqref{eq:Omega}, only replacing $\omega$ with $d\mu$.

\begin{lem} \label{lem:H_Ggammarotating}
	The function $\widehat{L}$ satisfies 
	\[
	\widehat{\rho}H_G \widehat{L} = \pm 2(\sgn \varkappa)\beta \widehat{K}+ \mathcal{I}(S^*_H X)\cdot \mathcal{I}(\underline{\Omega}) + \widehat{\eta}_V \cdot \mathcal{I}(\underline{\Omega}) + \mathcal{I}(L_\pm)^3
	\]
	near $L_\pm$.
	%\[
	%\pm (\sgn \varkappa) \widehat{\rho}H_G \widehat{K} \geq  2(1-\delta)\beta \widehat{K} - C_\delta(\widehat{\sigma}^2 +\mu^2)
	%\]
\end{lem}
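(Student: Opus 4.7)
The plan is to mimic the proof of Lemma \ref{lem:H_Ggamma}, tracking the new contributions from the rotational vector field $V$. First, applying Leibniz to $\widehat{K} = \widehat{\rho}^{\,2} L$ and invoking Lemma \ref{lem:H_GrhoV},
\[
\widehat{\rho}\,H_G\widehat{K} = 2\widehat{\rho}^{\,2}(H_G\widehat{\rho})\,L + \widehat{\rho}^{\,3}\,H_G L = \pm 2(\sgn\varkappa)\beta\,\widehat{K} + \widehat{\rho}^{\,3}\,H_G L
\]
modulo $\mathcal{I}(L_\pm)^3$, the error being absorbed because $L \in \mathcal{I}(\underline{\Omega})^2 \subset \mathcal{I}(L_\pm)^2$. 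It therefore suffices to prove that $\widehat{\rho}^{\,3}\,H_G L$ lies in the stated ideal sum.

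Second, work in the adapted coordinates $(\tau,y^A,\mu)$ with $dy^A \in \underline{\Pi}$, so that $L = \ell^{AB}\eta_A\eta_B$, $\partial_{\eta_A}L = 2\ell^{AB}\eta_B \in \mathcal{I}(\underline{\Omega})$, and $\partial_{\eta_n}L = 0$. Expand $H_G L = \{G,L\}$ using the explicit form of $G$ on $^bT^*_X M$ and record the coefficient identities forced on $H$ by $d\mu^\sharp = 2\varkappa K$: namely $h^{An}|_H = 0$, $h^{nn}|_H = 4\varkappa^2 A^2$, $W^n|_H = 2\varkappa A^2$, $W^A|_H = V^A$, which yield $(A^{-2}W^A W^n - h^{An})|_H = 2\varkappa V^A$ and $(A^{-2}(W^n)^2 - h^{nn})|_H = 0$. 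Sort the summands of $\widehat{\rho}^{\,3}\{G,L\}$ by fiber structure: pieces carrying a $\widehat{\sigma}$ factor land in $\mathcal{I}(S^*_H X)\cdot \mathcal{I}(\underline{\Omega})$; pieces with three capital fiber indices lie in $\mathcal{I}(\underline{\Omega})^3 \subset \mathcal{I}(L_\pm)^3$; and the $i=n$ summand $(\partial_{\eta_n}G)(\partial_\mu L)$, using $\partial_{\eta_n}G|_H = 4\varkappa(\sigma + \eta_V)$, produces the main $\widehat{\eta}_V\cdot \widehat{\eta}_A\widehat{\eta}_B \in \widehat{\eta}_V\cdot \mathcal{I}(\underline{\Omega})$ term alongside a $\widehat{\sigma}\cdot \widehat{\eta}_A\widehat{\eta}_B$ piece and a $\mu$-correction.

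The delicate step handles the remaining two-capital-plus-$\widehat{\eta}_n$ contributions from $(\partial_{\eta_A}G)(\partial_{y^A}L) - (\partial_{y^A}G)(\partial_{\eta_A}L)$. Here the Killing hypothesis $\mathcal{L}_{V|_H}\underline{\ell} = 0$, combined with the quadratic vanishing $V^n = O(\mu^2)$ (which forces $\partial_{y^i}V^n|_H = 0$), makes the $\partial V$ pieces of the two summands cancel pointwise on $H$. This cancellation is precisely the identity $\{L,\eta_V\}|_H = 0$ read off from the Killing equation $V^C\partial_C\underline{\ell}^{AB} = \underline{\ell}^{CB}\partial_C V^A + \underline{\ell}^{AC}\partial_C V^B$. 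What survives are $\partial\varkappa$-residues contracted against $V^A\widehat{\eta}_A = \widehat{\eta}_V + O(\mu^2)\widehat{\eta}_n$, placing them in $\widehat{\eta}_V\cdot \mathcal{I}(\underline{\Omega}) + \mathcal{I}(S^*_HX)\cdot \mathcal{I}(\underline{\Omega})$, plus $O(\mu)$ corrections that fall directly into $\mathcal{I}(S^*_HX)\cdot \mathcal{I}(\underline{\Omega})$.

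The main obstacle is this final index-shuffling cancellation: one must verify that the two potentially problematic sums symmetrize to exactly the Killing identity, leaving behind only the favorable residues above. The other hypotheses of Definition \ref{def:bhorizonrotating} play clear supporting roles --- the quadratic vanishing of $V\mu$ removes $\eta_n$-contributions from the Killing identity, while the existence of the $V$-invariant metric $\underline{\ell}$ is what makes the identity available in the first place.
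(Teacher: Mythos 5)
Your argument is correct and is, at bottom, the paper's own proof written out in adapted coordinates: the paper splits $G=G_0+G_1-K$ and uses tangency of $H_{\underline{\zeta}_A}$, $H_{\zeta_A}$, $H_{\eta_V}$ to reduce everything to the coordinate evaluation of $\widehat{\rho}^{\,2}H_{\eta_V}L$, where precisely your two ingredients (the Killing property of $V|_H$ with respect to $\underline{\ell}$ and the quadratic vanishing of $V\mu$, which kills the $i=n$ contributions) produce the cancellation; you run the same bookkeeping by expanding $\{G,L\}$ directly and sorting by fiber type, with the $\partial_{y^A}\varkappa$ residue landing in $\widehat{\eta}_V\cdot\mathcal{I}(\underline{\Omega})$ rather than being packaged through $G_1$. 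One caveat in your write-up: the intermediate identities $h^{An}|_H=0$ and $W^A|_H=V^A$ (which are equivalent to each other) are not justified in your coordinates, since $dy^A$ was chosen $\ell^{-1}$-orthogonal to $d\mu$ (i.e.\ $dy^A\in\underline{\Pi}$), not $k$-orthogonal to $\omega$; as $W-V$ is the $h$-dual of $d\mu/(2\varkappa)$ on $H$, one has $W^A-V^A=(2\varkappa)^{-1}k(dy^A,d\mu)$, which has no reason to vanish for the (essentially arbitrary) metric $\ell$ --- the $k$-orthogonal frame was the choice made in the non-rotating Section \ref{subsect:nearLpm}, not here. Fortunately only the combinations you actually use are needed, and these are correct and follow directly from the horizon condition: pairing $d\mu^\sharp=2\varkappa K$ with $dy^A$, $d\mu$ and $d\tau/\tau$ gives $(A^{-2}W^AW^n-h^{An})|_H=g^{-1}(d\mu,dy^A)|_H=2\varkappa V^A$, $(A^{-2}(W^n)^2-h^{nn})|_H=g^{-1}(d\mu,d\mu)|_H=0$ (nullity of $d\mu$ on $H$), and $A^{-2}W^n|_H=2\varkappa$, hence $\partial_{\eta_n}G|_H=4\varkappa(\sigma+\eta_V)$. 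With these substitutions your sorting, the Killing cancellation, and the $\widehat{\eta}_V$-residue all go through as stated; the only other coefficient needing care, $\partial_{y^A}\bigl((g^{-1})^{nn}\bigr)$ multiplying $\eta_n^2\,\ell^{AB}\eta_B$, vanishes on $H$ because $(g^{-1})^{nn}=G(d\mu)$ vanishes identically there and $\partial_{y^A}$ is tangent to $H$, which is what your ``$O(\mu)$ corrections'' implicitly use.
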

\begin{proof}
	Begin by writing $G = G_0 + G_1 - K$, where now 
	\[
	\begin{cases}
	G_0 = A^{-2}\sigma^2 + 2A^{-2}(A^2-\mu)\xi\sigma - A^{-2}(A^2 - \mu)\mu \xi^2,\\
	G_1 = 2A^{-2}\sigma \eta_V + A^{-2}\eta_V^2 + 2A^{-2}(A^2-\mu)\xi\eta_V.
	\end{cases}
	\]
By choosing local orthonormal frames for $\Pi$ and $\underline{\Pi}$, it may be assumed that locally
\[
K = \delta^{AB}\zeta_A \zeta_B  \quad L = \delta^{AB} \underline{\zeta}_A \underline{\zeta}_B.
\]
Exactly as in Lemma \ref{lem:H_Ggamma}, one has $\widehat{\rho}^{\,3}H _{G_0} L \in \mathcal{I}(S^*_H X)\cdot \mathcal{I}(\underline{\Omega})$, since $H_{\underline{\zeta}_A}$ is tangent to $^bT^*_HX$ (see \eqref{eq:HzetaA}). Similarly, $\widehat{\rho}^{\,3}H_{-K}{L} \in \mathcal{I}(L_\pm)^3$. To see this, it suffices to show that $H_{\zeta_A}\underline{\zeta}_B \in \mathcal{I}(N^*H)$. Write
\[
\zeta_A = a^i_A \eta_i, \quad \underline{\zeta}_B = \underline{a}^i_A \eta_i.
\]
Since these functions are linear in the fibers, $H_{\zeta_A}\underline{\zeta}_B$ is the commutator of $a^i_A\partial_{y^i}$ with $\underline{a}^i_B\partial_{y^i}$ paired with $\eta_idy^i$. But both vector fields are tangent to $H$, so the commutator vanishes when paired with $d\mu \in {^b}N^*H$.

 Finally, consider $\widehat{\rho}^{\,3}H_{G_1}L$. Modulo errors as in the statement of the lemma, this comes down to evaluating $\widehat{\rho}^{\,2}H_{\eta_V} L$, corresponding to the third summand in $G_1$. Using local coordinates as in the preceding paragraph and $\eqref{eq:Lincoords}$,
\[
\widehat{\rho}^{\,2}H_{\eta_V} L = V^i \partial_{y^i} (\ell^{AB}) \widehat\eta_A \widehat\eta_B - 2\partial_{y^A}(V^i) \ell^{AB}\widehat\eta_B\widehat\eta_i.
\]
Here, $\widehat{\eta}_A = \widehat{\rho}\eta_A$. When $i=n$ the expression above vanishes cubically along $L_\pm$. On the other hand, the sum over the remaining indices $i=1,\ldots,n-1$ is exactly the Lie derivative of $\underline{\ell}$ with respect to $V|_H$ evaluated at $\widehat{\eta}_A dy^A$, hence vanishes cubically along $L_\pm$. The proof is now complete, noting that $\widehat{\rho}H_G \widehat{L} = \pm 2 (\sgn \varkappa)\beta \widehat{L} + \widehat{\rho}^{\,3}H_G L$ near $L_\pm$.
\end{proof}

Let $\widehat{\rho}_\pm = \digamma^{-1}\widehat{L}+(\mu\pm4\widehat{\sigma}\pm4\widehat{\eta}_V)^2 + \widehat{\sigma}^2 + \widehat{\eta}_V^{\,2}$ near $L_\pm$, where $\digamma> 0$ will once again be chosen large. Since the quadratic form $(\mu\pm4\widehat{\sigma}\pm 4\widehat{\eta}_V)^2 + \widehat{\sigma}^2 + \widehat{\eta}_V^{\,2}$ is positive definite in $(\mu,\widehat{\sigma},\widehat{\eta}_V)$, it follows that $\widehat{\rho}_\pm$ attains its minimum along $L_\pm$. Now consider the proof of Theorem \ref{theo:bdynamics} in this setting.

\begin{proof} [Proof of Theorem \ref{theo:bdynamics}]
The proof is essentially the same as in Section \ref{subsect:nearLpm}; the only difference is an additional error term of the form $\digamma^{-1} \widehat{\eta}_V \cdot f$, where $f \in \mathcal{I}(\underline{\Omega})$ is independent of $\digamma>0$. This can then be bounded by 
\[
\digamma^{-1}|\widehat{\eta}_V \cdot f| \leq \digamma^{-1}(\gamma \widehat{L} + C_\gamma \widehat{\eta}_V^{\,2}).
\] 
Since $\widehat{\rho}_\pm$ is actually positive definite in $(\mu,\widehat{\sigma},\widehat{\eta}_V)$, the last term can be absorbed by choosing $\digamma>0$ sufficiently large.
\end{proof}

\section{Applications to stationary spacetimes and quasinormal modes} \label{sect:applications}

In this section we outline how Vasy's method \cite{vasy:2013} is used to establish Theorem \ref{theo:vasy} for stationary spacetimes bounded by Killing horizons. As remarked in the introduction, one of the original applications of \cite{vasy:2013} was to the proof of Theorem \ref{theo:vasy} for the particular case of Kerr-de Sitter spacetimes.

\subsection{Relationship with stationary spacetimes} \label{subsect:relation} Let $\mathcal{M}$ be a stationary spacetime, identified globally with $\RR_t \times \mathcal{X}$. Define 
\[
M = [0,\infty)_\tau\times \mathcal{X},
\]
and embed $\mathcal{M}\subset M$ via the map $(t,q) \mapsto (e^{-t},q)$. Then the boundary $X$ of $M$ is $\{\tau=0\}\times \mathcal{X}$, where $\tau = e^{-t}$ is a boundary defining function (note that we are being somewhat imprecise with the terminology when $\mathcal{M}$ itself has a boundary --- in that case $M$ is a manifold with corners). We think of $M$ as a compactification of $\mathcal{M}$ obtained by gluing in the component $X$ located at future infinity.

Since $dt = -d\tau/\tau$ and the original metric is stationary, $g$ admits a unique extension to a Lorentzian b-metric on $M$. This extension is invariant under positive dilations in $\tau$, as seen from the equality $T = -\tau\partial_{\tau}$. Because $d\tau/\tau$ is everywhere timelike, the sets $\Sigma_\pm$ are well defined throughout $^bT^*M$, and the function $\widehat{\rho}$ is nonvanishing on $\Sigma_\pm$.

Now consider the $\widehat{\rho}H_G$ flow on $^bT^*M \cap \Sigma_\pm$, recalling that $\widehat{\rho} = |\tau^{-1}H_G\tau|^{-1}$. In canonical coordinates $(\tau,y^i,\sigma,\eta_i)$,
\[
\widehat{\rho}H_G = \mp \tau\partial_{\tau} + (\widehat{\rho}\,\partial_{\eta_i} G)\partial_{y^i} - (\widehat{\rho}\,\partial_{y^i} G)\partial_{\eta_i}.
\]
On $^bT^*_{M^\circ}M \cap \Sigma_\pm = T^*\mathcal{M}  \cap \Sigma_\pm$, this is just a rescaling of the usual null-geodesic flow on $T^*\mathcal{M}$. Furthermore
\[
t(\exp(s\widehat{\rho}H_G)\varpi) = t_0\pm s,
\]
where $\varpi \in T^*\mathcal{M}  \cap \Sigma_\pm$ satisfies $t(\varpi) = t_0$. The flow of $\widehat{\rho}H_G$ on $^bT^*_XM \cap \Sigma_\pm$ is therefore equivalent to the quotient of the (rescaled) null-geodesic flow on $T^*\mathcal{M} \cap \Sigma_\pm$ by time translations.

Now suppose that $\mathcal{M}$ is a spacetime bounded by nondegenerate Killing horizons which is nontrapping at zero energy. Let $M$ be its compactification as above, and identify $\partial X$ with $\partial \mathcal{M} \cap \mathcal{X}$. Zero-energy null-geodesics correspond to the $\widehat{\rho}H_G$ flow on $T^*\mathcal{M} \cap \Sigma_\pm \cap \{\sigma =0\}$. In fact, based on the remarks in the previous paragraph, that $\mathcal{M}$ is nontrapping at zero energy is equivalent to the statement that for each \[
\varpi \in \big( {^b}S^*X \cap \widehat{\Sigma}_\pm \big) \setminus {^b}SN^*\partial X
\] 
the following two conditions are satisfied:
\begin{enumerate} \itemsep6pt 
	\item  $\exp(s\widehat{\rho}H_G)\varpi\rightarrow {^b}SN^*\partial X$ as $s\rightarrow \mp \infty$,
	\item there exists $s_0\geq 0$ such that $\exp(\pm s_0 \widehat{\rho}H_G) \varpi \in {^b S^*_{\partial X}} X$.
\end{enumerate}

%\begin{lem} \label{lem:nontrappingequivalent}
%	The following conditions are equivalent.
%	\begin{enumerate} \itemsep6pt
%		\item $\mathcal{M}$ is nontrapping at zero energy.
%		\item , and there exists $s_0 \geq 0$ such that 
%	\end{enumerate}
%\end{lem}

\subsection{The extended spacetime} Suppose that $\mathcal{M}$ is bounded by Killing horizons and is nontrapping at zero energy. In order to exploit the full dynamical structure, it is important to embed $\mathcal{M}$ in a larger stationary spacetime $\mathcal{N}$ obtained by extension across the horizons. For ease of notation, it will be assumed in this section that $\partial \mathcal{M}$ consists of a single connected component; the general case is handled analogously.

By stationarity, it suffices to enlarge $\mathcal{X}$. To begin, embed $\mathcal{X}$ in a boundaryless compact manifold $\widetilde{\mathcal{X}}$, and extend $g$ arbitrarily to a stationary metric on $\RR \times \widetilde{\mathcal{X}}$. Also recall that 
\[
K = T + V,
\] 
where $V$ is tangent to $\mathcal{X}$ and satisfies $[T,V]=0$. In view of this last condition, $V$ is tangent to each time slice. Extend $V$  to a $T$-invariant vector field on $\RR \times \widetilde{\mathcal{X}}$, which therefore defines an extension of $K$ (which is of course not Killing in general). If $\mu = g(K,K)$, then $T\mu =0$,
%\begin{equation} \label{eq:Tmu}
%T \mu = (\mathcal{L}_T g)(K,K) + 2g([T,K],K) = 0,
%\end{equation}
so $\mu$ can be identified with a function on $\widetilde{\mathcal{X}}$. Since the surface gravity is assumed to be nonvanishing, $\partial  \mathcal{X} \subset \widetilde{\mathcal{X}}$ is an embedded hypersurface, and if $\varepsilon > 0$ is sufficiently small, then $\{ |\mu| \leq \varepsilon\}$ defines a neighborhood of $\partial \mathcal{X}$ in $\widetilde{\mathcal{X}}$ diffeomorphic to $[-\varepsilon,\varepsilon] \times \partial\mathcal{X}$. With this identification one has $[0,\varepsilon]\times \partial \mathcal{X} \subseteq \mathcal{X}$, since by assumption $K$ is timelike in a punctured neighborhood of $\mathcal{H}$ within $\mathcal{M}$.

Choose $\varepsilon > 0$ sufficiently small so that $d\mu$ is timelike in $\{-\varepsilon\leq \mu < 0\}$. With this choice, let $\mathcal{Y}\subset \widetilde{\mathcal{X}}$ be the union
\[ 
\mathcal{Y} = \mathcal{X} \cup \{|\mu| \leq \varepsilon\},
\]
and finally set $\mathcal{N} = \RR \times \mathcal{Y}$. Observe that $\mathcal{Y}$ has a boundary component corresponding to $\{\mu = -\varepsilon\} \times \partial \mathcal{X}$. Next, set 
\[
N = [0,\infty) \times \mathcal{Y}, \quad Y= \{\tau =0\} \times \mathcal{Y}.
\]  
Equip $N$ with its induced Lorentzian b-metric as in Section \ref{subsect:relation}. Then, $\partial \mathcal{X}$ can be identified with a hypersurface $H \subset Y$, and the claim is that $H$ is a b-horizon in the sense of Definition \ref{def:bhorizonrotating}. 

To begin, note that the  $T$-invariant vector field $K = T + V$ induces a section of $\tau\partial_{\tau} + {^b}TY$. Indeed, $T = -\tau\partial_{\tau}$, and the tangency condition $Vt= -\left<V,d\tau/\tau\right> = 0$ extends by continuity from $\mathcal{N}$ to show that $V$ defines a section of $^bTY$.
\begin{lem}
	$H \subset Y$ is a nondegenerate b-horizon generated by $-K = \tau\partial_{\tau} - V$.
\end{lem}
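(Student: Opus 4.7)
The plan is to verify each clause of Definition \ref{def:bhorizonrotating} in turn, with $K = -K_j = \tau\partial_{\tau} - V_j$ and hence $V = -V_j$ playing the roles of the generator and its tangential part, so that $\mu := g(K,K) = g(K_j,K_j) = \mu_j$. First I would note that $H_j \simeq \mathcal{H}_j \cap \mathcal{X}$ is a compact connected component of $\partial\mathcal{X}$ (since $\partial\mathcal{X}$ is the disjoint union of the $\mathcal{H}_i \cap \mathcal{X}$, and $\mathcal{X}$ is compact), and that, because $\varkappa_j \neq 0$, $\mu_j$ is a defining function of this component inside $\widetilde{\mathcal{X}}$, so $H_j$ is a compact connected component of $\{\mu=0\} \subset Y$.

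Next I would translate the Killing-horizon surface-gravity identity into b-language. Because $g$ extends as a Lorentzian b-metric and $\mu_j$, $K_j$ are both $T$-invariant, they are $\tau$-independent at $\tau = 0$, and the identity $\grad(g(K_j,K_j)) = -2\varkappa_j K_j$ on $\mathcal{H}_j$ lifts to
\[
d\mu^{\sharp} = -2\varkappa_j K_j = 2\varkappa_j K \qquad \text{on } H_j,
\]
with the b-gradient computed using the b-metric at the boundary. This identifies the surface gravity of $H_j$ as $\varkappa := \varkappa_j$, which is positive by hypothesis, hence nowhere vanishing; so $H_j$ will be nondegenerate once the remaining conditions are verified.

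For conditions (2) and (3) I would use that $V_j = K_j - T$ is a Killing vector field commuting with $K_j$. Indeed $[V_j,K_j] = -[T,K_j] = 0$, and since $V_j$ is Killing,
\[
V_j\mu_j = \mathcal{L}_{V_j}\bigl(g(K_j,K_j)\bigr) = 2g([V_j,K_j],K_j) = 0
\]
identically on $\mathcal{N}$; in particular $V\mu = -V_j\mu_j \equiv 0$, which is much stronger than quadratic vanishing along $H_j$. For $V\varkappa = 0$, differentiate $\grad\mu_j = -2\varkappa_j K_j$ along $V_j$: since $V_j$ is Killing, $\mathcal{L}_{V_j}(\grad\mu_j) = \grad(V_j\mu_j) = 0$, while $\mathcal{L}_{V_j}(\varkappa_j K_j) = (V_j\varkappa_j)K_j + \varkappa_j[V_j,K_j] = (V_j\varkappa_j)K_j$. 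Thus $(V_j\varkappa_j)K_j = 0$ on $\mathcal{H}_j$, and since $K_j$ is nonzero on $\mathcal{H}_j$, this gives $V\varkappa = -V_j\varkappa_j = 0$ on $H_j$.

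For the final condition, I would take $\underline{\ell}$ to be the Riemannian metric induced on $\mathcal{H}_j \cap \mathcal{X}$ from the spacelike hypersurface $\mathcal{X}$. Because $V_j$ is Killing for $g$ and tangent to the spacelike $\mathcal{X}$, its flow preserves both $\mathcal{X}$ and $g|_{T\mathcal{X}}$, so $V_j$ is Killing for the induced Riemannian metric on $\mathcal{X}$; and since $V_j\mu_j = 0$ globally with $d\mu_j \neq 0$ near $\mathcal{H}_j$, $V_j$ is tangent to $\mathcal{H}_j \cap \mathcal{X}$. Restricting once more, $V_j$ (and hence $V = -V_j$) is Killing with respect to $\underline{\ell}$. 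The main obstacle in writing this out is bookkeeping: the sign conventions for $V_j$ differ between Definition \ref{defi:bounded} and the current section, and one has to carefully check that each spacetime identity on $\mathcal{M}$ (originally formulated with the ordinary gradient) indeed extends continuously to the b-framework on $N$, using the $T$-invariance of the extension of $g$, $K_j$, and $V_j$.
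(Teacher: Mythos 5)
Your overall strategy is the same as the paper's: check the four clauses of Definition \ref{def:bhorizonrotating} for $-K_j=\tau\partial_\tau-V_j$, take $\underline{\ell}$ to be the (positive definite) metric induced on $\mathcal{H}_j\cap\mathcal{X}$, and use Lie-derivative computations along the generators for the surface-gravity statements. However, there is a genuine error in your treatment of clause (2). You claim that $V_j$ is Killing on $\mathcal{N}$ and conclude $V_j\mu_j\equiv 0$ \emph{identically on $\mathcal{N}$}. That is not available: the metric is extended from $\mathcal{M}$ to $\RR\times\widetilde{\mathcal{X}}$ arbitrarily (only stationarity, $\mathcal{L}_Tg=0$, is imposed), and $V_j$ is extended merely as a $T$-invariant vector field, so on $\mathcal{N}\setminus\mathcal{M}$ neither $K_j$ nor $V_j$ is Killing. (The commutator identity $[V_j,K_j]=-[T,K_j]=0$ does persist, but in $V_j\mu_j=(\mathcal{L}_{V_j}g)(K_j,K_j)+2g([V_j,K_j],K_j)$ it is the first term that need not vanish off $\mathcal{M}$.) Hence $V_j\mu_j=0$ is known only on $\mathcal{M}$, i.e.\ on the side $\{\mu_j\geq 0\}$ of $\mathcal{H}_j$, whereas $H_j$ lies in the boundary $Y$ of the \emph{extended} manifold $N$ and clause (2) requires quadratic vanishing of $V\mu$ along $H_j$ as a function on $Y$, in particular from the extended side $\{\mu_j<0\}$ — which is exactly where your argument does not apply, and exactly where this condition is later used (Lemma \ref{lem:etaVproperties} and the source/sink estimates). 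The repair is the continuity argument the paper gives: $V_j\mu_j$ is smooth on $\mathcal{N}$ and vanishes identically on the closed region lying on one side of the hypersurface $\mathcal{H}_j$, so all of its derivatives vanish along $\mathcal{H}_j$; thus $V_j\mu_j$ vanishes to infinite order (in particular quadratically) along $\mathcal{H}_j$, and then along $H_j$ by $T$-invariance.

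Two smaller points. In clause (1) you should also record that $T\varkappa_j=0$ (the paper gets this from $\mathcal{L}_T(\grad\mu_j)=\grad(T\mu_j)=0$ and $\mathcal{L}_T(\varkappa_jK_j)=(T\varkappa_j)K_j$, using $K_j\neq 0$ on $\mathcal{H}_j$): this is what lets the identity $\grad\mu_j=-2\varkappa_jK_j$, which a priori lives on $\mathcal{H}_j\subset\{\tau>0\}$, pass by continuity to the boundary face $\tau=0$ and makes $\varkappa_j$ descend to a function on $H_j$; you invoke $T$-invariance of $g$, $K_j$, $V_j$ but not of $\varkappa_j$. By contrast, your arguments for clauses (3) and (4) take place on $\mathcal{H}_j\cap\mathcal{X}\subset\mathcal{M}$, where $V_j$ genuinely is Killing and tangent to $\mathcal{X}$ and to $\mathcal{H}_j$, so those parts are correct and coincide with the paper's proof.
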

\begin{proof}
\begin{inparaenum}[(1)] The four hypotheses in Definition \ref{def:bhorizonrotating} are verified. \item Recall that $\grad_g \mu = -2\varkappa K$ on $\mathcal{H}$, where the gradient is with respect to the original metric on $\mathcal{M}$. Since $T$ is Killing, a brief calculation shows that $T\varkappa = 0$.
%	\[
%	\mathcal{L}_T ( \grad_g \mu_j ) = \grad_g (T\mu_j) = 0.
%	\] 
%	On the other hand, $\mathcal{L}_T (\varkappa_j K_j) = (T\varkappa_j)K_j$ on $\mathcal{H}_j$, since $T$ is tangent to $\mathcal{H}_j$. But $K_j \neq 0$ on $\mathcal{H}_j$, so $T\varkappa_j = 0$. 
By continuity from $\mathcal{N}$, it follows that $d\mu^\sharp = -2\varkappa K$ on $H$. 
	
	\item Since  $V$ is Killing on $\mathcal{M}$, it is easy to see that $V \mu =0$ on the interior $\mathcal{M}^\circ$. Although the extension of $V$ to $\mathcal{N}$ is not assumed to be Killing, it nevertheless follows by continuity from $\mathcal{M}^\circ$ that $V \mu$ vanishes to infinite order along $\mathcal{H}$, and hence along $H$ by $T$-invariance.
	
	\item As in the first item of the proof, a simple calculation using that $V$ is Killing on $\mathcal{M}$ shows that $V\varkappa=0$.

\item Let $\underline{\ell}$ be the positive definite metric on $\partial \mathcal{X}$ induced by $g$. Now $V$ is tangent to $\partial \mathcal{X}$, and by definition $V$ generates isometries with respect to $\underline{\ell}$. It then suffices to consider $\underline{\ell}$ as a metric on $H$ via the latter's identification with $\partial \mathcal{X}$.
\end{inparaenum}	
\end{proof}

Now consider the trapping properties of null-geodesics in the extended region; recall that 
\[
L_{\pm} = {^bSN^*}H \cap \widehat \Sigma_\pm.
\]
Since $\varkappa > 0$ in this setting (see Definition \ref{defi:bounded}), it follows that $L_{\pm}$ is a source/sink within $^bS^*_{Y}N$.

\begin{lem} \label{lem:nontrappingextended}
	Let $\mathcal{M}$ be nontrapping at zero energy. If $	\varpi \in \big( {^b}S^*Y \cap \widehat{\Sigma}_\pm \big)\setminus L_\pm$, then the following conditions hold:
	\begin{enumerate} \itemsep6pt 
		\item $\exp(s\widehat{\rho}H_G)\varpi \rightarrow L_{\pm}$ as $s\rightarrow \mp \infty$,
		\item there exists $s_0 \geq 0$ such that $\mu(\exp(\pm s_0\widehat{\rho}H_G)\varpi) = -\varepsilon$.
	\end{enumerate} 
\end{lem}
\begin{proof}
	Let $\varpi$ be as above.\begin{inparaenum}[(1)]\item If $\varpi \in {^b}S^*_X Y$, then it follows from Definition \ref{defi:nontrapping} that $\exp(s \widehat \rho H_G)\varpi \rightarrow L_{\pm}$ as $s\rightarrow \mp\infty$.  So now assume that $\varpi \in {^b}S^*_{Y\setminus X}Y$, in the region obtained by extension across $H$. Since $d\mu$ is timelike in the extended region by construction, $\pm \mu(\exp(s\widehat\rho H_G)\varpi)$ increases as $s$ decreases by Lemma \ref{lem:HGmunegative}. If $\mu(\exp(s_0\widehat{\rho}H_G)\varpi) = 0$ for some finite value $\pm s_0 < 0$, then Definition \ref{defi:nontrapping} applies. Otherwise, $\mu(\exp(s \widehat{\rho}H_G)\varpi) \rightarrow \mu_0 \leq 0$ as $s\rightarrow \mp\infty$. By the mean value theorem, there is a sequence $s_n \rightarrow \mp \infty$ along which
	\begin{equation} \label{eq:muzero}
	\mu(\exp(s_n \widehat{\rho}H_G)\varpi) \rightarrow \mu_0, \quad (H_G\mu)(\exp(s_n \widehat{\rho}H_G)\varpi) \rightarrow 0.
\end{equation}
It follows by compactness that $\exp(s_n \widehat \rho H_G)\varpi \rightarrow \varpi^\star$ for some $\varpi^\star \in \widehat{\Sigma}_\pm$ (recall that $\widehat{\Sigma}$ is closed), extracting a subsequence if necessary. In particular,
\[
\mu(\varpi^\star) = \mu_0, \quad (H_G \mu)(\varpi^\star) = 0,
\]
which by Lemma \ref{lem:HGmunegative} implies that $\mu_0 = 0$. On the other hand, if $\mu_0 = 0$, then it follows from Lemma \ref{lem:HGmu} that $\varpi^\star \in L_\pm$. But $L_{\pm}$ is a source/sink, which implies that $\exp(s \widehat{\rho} H_G)\varpi \rightarrow L_{\pm}$ as $s\rightarrow \mp\infty$, not just along the sequence $s_n$.

%	
%	
%	This is only possible if $\mu_0 = 0$, since otherwise $H_G \mu$ would have a positive lower bound on $\{ \mu \leq \mu_0\}$   By compactness, there is a sequence $s_n\rightarrow \mp \infty$ such that In particular $(H_G\mu)(\exp(s_n \widehat{\rho}H_G)\varpi)$ has a limit, which is necessarily zero by \eqref{eq:muzero} and the mean value theorem. 
	
	\item By Definition \ref{defi:nontrapping} there is no loss in first assuming that $\mu (\varpi) \leq 0$. In fact, since $\varpi \notin L_\pm$ it may be assumed that $\mu(\varpi) < 0$ by Lemma \ref{lem:HGmu}. Once this is known, it follows that $\mu(\exp(\pm s_0\widehat{\rho}H_G)\varpi) = -\varepsilon$ for some $s_0 \geq 0$ by Lemma \ref{lem:HGmunegative} and the same argument as in the first part of the proof.
\end{inparaenum}
\end{proof}

Lemma \ref{lem:nontrappingextended} shows that nontrapping properties are unaffected by the extension procedure, precisely due to the switch in causality of $d\mu$.

\subsection{Outline of the proof of Theorem \ref{theo:vasy}} The extended stationary wave operator on $\mathcal{Y}$ will also be denoted by $\widehat{\Box}(\sigma)$. Under the assumption that $\mathcal{M}$ is nontrapping at zero energy, all the hypotheses of \cite[Sections 2.2, 2.6]{vasy:2013} are satisfied. In the notation there, we take 
\[
\tilde{\rho} = \widehat{\rho}, \quad \rho_0 = \rho_\mp, \quad \beta_0 = \varkappa, \quad \beta_1 = \beta, \quad \beta_{\mathrm{inf}} = (\varkappa_{\mathrm{sup}})^{-1}, \quad \beta_{\mathrm{sup}} = (\varkappa_{\mathrm{inf}})^{-1},
\] 
suppressing the subscript $j$ in $\rho_0, \, \beta_0, \, \beta_1$ which indexes the different boundary components. However, as compared to \cite[Section 2]{vasy:2013}, we do not use a complex absorbing operator in the extended region $\mathcal{Y} \setminus \mathcal{X}$. Instead, we use the fact that $\widehat{\Box}(\sigma)$ is strictly hyperbolic with respect to the appropriate $d\mu_j$ in $\mathcal{Y} \setminus \mathcal{X}$, see \cite[Remark 2.6]{vasy:2013} and Lemma \ref{lem:dmucausal}. For a detailed discussion of this alternative see \cite{hintz2013semilinear,gannot:2014:kerr,zworski2015resonances}, as well as the textbook treatment in \cite[Chapter 5]{zworski:resonances}. Vasy's method then shows that
\[
\widehat{\Box}(\sigma): \{u' \in \bar{H}^{s}\left(\mathcal{Y}\right): \widehat{\Box}(0)u' \in \bar{H}^{s-1}\left(\mathcal Y\right)\} \rightarrow \bar{H}^{s-1}\left(\mathcal{Y}\right)
\]
is Fredholm of index zero in the half-plane $\Im \sigma >(1/2-s)\cdot \varkappa$. 

The difference between this result and Theorem \ref{theo:vasy} is that $\mathcal{X}$ has been replaced by the larger space $\mathcal{Y}$ (note that the extension is essentially arbitrary). To relate this back to the original problem on $\mathcal{X}$, first choose a continuous linear extension $e_s: \bar{H}^{s-1}(\mathcal{X}) \rightarrow \bar{H}^{s-1}(\mathcal{Y})$ and let $R \subset \bar{H}^{s-1}(\mathcal{Y})$ denote the range of $\widehat{\Box}(\sigma)$. Since $R$ has finite codimension in $\bar{H}^{s-1}(\mathcal{Y})$, the image of $e_s$ intersected with $R$ is of finite codimension in the image of $e_s$. But $e_s$ is injective, so there is a finite codimension subspace $S\subset  \bar{H}^s(\mathcal{X})$ such that $e_s(S) \subset R$. If $f \in S$, then
\[
\widehat{\Box}(\sigma)u' = e_s f
\]
has a solution $u' \in \bar{H}^{s}\left(\mathcal{Y}\right)$. Restricting $u'$ gives a solution to $\widehat{\Box}(\sigma)u=f$ on $\mathcal{X}$. Therefore 
\begin{equation} \label{eq:unextendedoperator}
\widehat{\Box}(\sigma): \{u \in \bar{H}^{s}\left(\mathcal{X}\right): \widehat{\Box}(0)u \in \bar{H}^{s-1}\left(\mathcal X\right)\} \rightarrow \bar{H}^{s-1}\left(\mathcal{X}\right)
\end{equation}
has closed, finite codimensional range for $\Im \sigma >(1/2-s)\cdot \varkappa$. Now the same energy estimates that imply $\widehat{\Box}(\sigma)$ is invertible on $\mathcal{Y}$ for large $\Im \sigma>0$ imply invertibility on $\mathcal{X}$ as well. Since the index of semi-Fredholm operators is locally constant, the operator \eqref{eq:unextendedoperator} is Fredholm of index zero.

\section*{Acknowledgments}
This paper is partially based upon work supported by the National Science Foundation under Grant No. 1502632. The author would like to thank Peter Hintz for his interest in the problem and for several valuable comments and suggestions.

	\bibliographystyle{plain}
	
	\bibliography{biblio}

\end{document}